\documentclass[pdflatex,sn-mathphys-num]{sn-jnl}% Math and Physical Sciences Numbered Reference Style
%%\documentclass[pdflatex,sn-mathphys-ay]{sn-jnl}% Math and Physical Sciences Author Year Reference Style
%%\documentclass[pdflatex,sn-aps]{sn-jnl}% American Physical Society (APS) Reference Style
%%\documentclass[pdflatex,sn-vancouver-num]{sn-jnl}% Vancouver Numbered Reference Style
%%\documentclass[pdflatex,sn-vancouver-ay]{sn-jnl}% Vancouver Author Year Reference Style
% \documentclass[pdflatex,sn-apa]{sn-jnl}% APA Reference Style
%%\documentclass[pdflatex,sn-chicago]{sn-jnl}% Chicago-based Humanities Reference Style

%%%% Standard Packages
%%<additional latex packages if required can be included here>

\usepackage{graphicx}%
\usepackage{multirow}%
\usepackage{amsmath,amssymb,amsfonts}%
\usepackage{amsthm}%
\usepackage{mathrsfs}%
\usepackage[title]{appendix}%
\usepackage{xcolor}%
\usepackage{textcomp}%
\usepackage{manyfoot}%
\usepackage{booktabs}%
\usepackage{algorithm}%
\usepackage{algorithmicx}%
\usepackage{algpseudocode}%
\usepackage{listings}%
%%%%
\usepackage{physics}
\usepackage{booktabs}
\usepackage{tikz}
\usetikzlibrary{positioning, shapes.geometric, arrows.meta, snakes}
\usepackage{quantikz}

\newtheorem{theorem}{Theorem}

\newcommand{\ControlledU}[2]{\texttt{C-$\mathrm{U}$}_{#1, #2}}

%%%%%=============================================================================%%%%
%%%%  Remarks: This template is provided to aid authors with the preparation
%%%%  of original research articles intended for submission to journals published 
%%%%  by Springer Nature. The guidance has been prepared in partnership with 
%%%%  production teams to conform to Springer Nature technical requirements. 
%%%%  Editorial and presentation requirements differ among journal portfolios and 
%%%%  research disciplines. You may find sections in this template are irrelevant 
%%%%  to your work and are empowered to omit any such section if allowed by the 
%%%%  journal you intend to submit to. The submission guidelines and policies 
%%%%  of the journal take precedence. A detailed User Manual is available in the 
%%%%  template package for technical guidance.
%%%%%=============================================================================%%%%

%% as per the requirement new theorem styles can be included as shown below
% \theoremstyle{thmstyleone}%
% \newtheorem{theorem}{Theorem}%  meant for continuous numbers
%%\newtheorem{theorem}{Theorem}[section]% meant for sectionwise numbers
%% optional argument [theorem] produces theorem numbering sequence instead of independent numbers for Proposition
% 
%%\newtheorem{proposition}{Proposition}% to get separate numbers for theorem and proposition etc.

\theoremstyle{thmstyletwo}%

\theoremstyle{thmstylethree}%

\raggedbottom
%%\unnumbered% uncomment this for unnumbered level heads

\begin{document}

\title[Article Title]{A Provably Secure Framework for Noise-Aware Delegated Quantum Computation and Storage}

%%=============================================================%%
%% GivenName	-> \fnm{Joergen W.}
%% Particle	-> \spfx{van der} -> surname prefix
%% FamilyName	-> \sur{Ploeg}
%% Suffix	-> \sfx{IV}
%% \author*[1,2]{\fnm{Joergen W.} \spfx{van der} \sur{Ploeg} 
%%  \sfx{IV}}\email{iauthor@gmail.com}
%%=============================================================%%

\author[1]{\fnm{Sanidhya} \sur{ Gupta}}\email{sanidhya19@iiserb.ac.in}

\author*[1]{\fnm{Ankur} \sur{Raina}}\email{ankur@iiserb.ac.in}
% \equalcont{These authors contributed equally to this work.}

% \author[1,2]{\fnm{Third} \sur{Author}}\email{iiiauthor@gmail.com}
% \equalcont{These authors contributed equally to this work.}

\affil[1]{\orgdiv{Department of Electrical Engineering and Computer Science}, \\  \orgname{Indian Institute of Science Education and Research Bhopal},\\ \orgaddress{ \city{Bhopal}, \state{Madhya Pradesh}, \country{India}}}

%%==================================%%
%% Sample for unstructured abstract %%
%%==================================%%

\abstract{
As large-scale quantum computers become a reality, they will likely exist as centralized cloud resources accessible to a broad user base. 
Securely delegating private quantum computations to untrusted servers is therefore a foundational challenge. 
This requires rigorous, provable guarantees of privacy (blindness), correctness (completeness), and integrity against malicious actions (verifiability).
This paper presents a complete end-to-end framework for noise-aware distributed quantum computation. 
Our architecture is built on three technical pillars: (1) a distributed stabilizer code backbone to securely encode and store quantum states across multiple non-communicating server nodes; (2) a two-level error correction scheme, where each server node can locally correct errors based on its specific noise model; and (3) a trap-based verification protocol to detect any malicious deviation by the server.
We formally provide a security analysis to prove that our framework achieves completeness, blindness, and verifiability.
Our work thus provides a practical and provably secure blueprint for trustworthy distributed quantum computation framework, paving the way for secure quantum cloud services.
}

\keywords{quantum networks, distributed quantum computing, blind quantum computing, error correction}

%%\pacs[JEL Classification]{D8, H51}

%%\pacs[MSC Classification]{35A01, 65L10, 65L12, 65L20, 65L70}

\maketitle

\section{Introduction}
\label{sec:introduction}

Distributed quantum computing (DQC) represents a critical pathway to overcoming the scalability limitations of individual quantum processors, leveraging the quantum internet to connect distant devices for collaborative computation \cite{qu_internet_Kimble2008, qu_intrnt_infra_2004}. 
Akin to its classical counterpart \cite{Kshemkalyani_Singhal_2008}, DQC aims to solve complex problems by distributing tasks across a network of quantum computers \cite{sundaram2022distribution, sünkel2024applying, Wu_2023}. 
However, the practical realization of a robust DQC network requires surmounting significant challenges in hardware, algorithmic design, and, most importantly, security.

Progress is being actively pursued on multiple fronts. 
In hardware, developing fault-tolerant quantum repeaters for long-distance communication remains a primary focus \cite{Li2019, Wo2023, Azuma_2023}. 
In parallel, theoretical effort is directed at the software layer, including efficient quantum circuit partitioning and entanglement distribution schemes \cite{andresmartinez2019, andresmartinez2023distributing, Avis2023}. 
Yet, a critical challenge in this domain is ensuring security under a rigorous, modern cryptographic lens. 
While many protocols address blindness or verifiability \cite{Cai2023}, few are proven secure within a composable framework, which guarantees security when the protocol is used as a component in larger systems \cite{dunjko2014composable, houshmand2018composable}. 
A crucial gap remains for a comprehensive framework that unifies computation, secure storage, and multi-level error correction into a single, provably secure architecture \cite{Ferrari_2023, ferrari2024simulation}.

In this paper, we address this architectural and security gap. 
We tackle the following key problem: Suppose Alice, who operates a quantum computer within a larger network, has insufficient local resources. 
She seeks to leverage the computational and storage capabilities of other remote nodes to execute her computations and securely store the resulting states. 
Crucially, Alice requires that these assisting nodes learn nothing about her computation (a property known as \textit{blindness}) and that she can \textit{verify} that the results are correct, even if the remote nodes are malicious.

To solve this, we propose and formally analyze a novel, noise-aware architecture for distributed quantum computation and storage. 
Our contributions are:
\begin{enumerate}
    \item We design a complete, end-to-end architecture based on a distributed stabilizer code backbone, which securely encodes a client's quantum state across multiple server nodes.
    \item We introduce a two-level error correction scheme, allowing each server node to handle local errors independently using noise-aware methods, thereby enhancing the system's robustness and modularity.
    \item We integrate a formal trap-based verification protocol, empowering the client to detect any malicious deviation from the protocol by the server with arbitrarily high probability.
    \item We provide a complete security analysis within the modern real-world vs. ideal-world paradigm, formally proving that our complete architecture achieves \textit{completeness}, \textit{blindness}, and \textit{verifiability}.
\end{enumerate}

The structure of this paper is as follows. 
In Section \ref{sec:security_framework}, we introduce the formal security framework and provide the definitions for our security goals.
Section \ref{sec:foundations} reviews the foundational concepts of the stabilizer formalism and establishes the system model upon which our architecture is built.
Section \ref{sec:core_protocols} details our core methodologies, including the distributed encoding algorithms and the procedure for non-local gates. 
In Section \ref{sec:specs}, we outline the full specifications of the architecture, including the trap-based verification protocol and node failure recovery.
In Section \ref{sec:resource_analysis}, we provide the full resource analysis with a concrete example.
Finally,  we present the security analysis, providing proof sketches for our main security claims in Section \ref{sec:security_analysis} and conclude in Section \ref{sec:conclusion}.

\section{Security framework and definitions}
\label{sec:security_framework}

To formally analyze the security of our distributed quantum computation (DQC) architecture, we adopt the \textit{real world vs. ideal world} paradigm, a widely used method in modern cryptography and composable security frameworks \cite{dunjko2014composable, houshmand2018composable, goldreich2004foundations}. 
This approach allows us to define security by comparing our real protocol to a perfect, idealized version of the task.

\subsection{The real world vs. ideal world paradigm}

The security of a protocol is established by comparing two scenarios:
\begin{itemize}
    \item \textbf{The real world:} This setting consists of the actual participants and the protocol they execute. 
    In our case, the participants are the \textit{client} (the master node) and the \textit{server} (the set of $n$ collaborating leaf nodes). 
    They communicate over the quantum and classical channels described in our architecture, executing all the necessary algorithms. 
    The server is considered a single, potentially malicious entity.

    \item \textbf{The ideal world:} This is a hypothetical world containing an incorruptible, trusted third party that executes the task perfectly. 
    We model this party as an \textit{ideal functionality}, denoted by $\mathcal{F}_{\text{DQC}}$, which is a black box that defines the perfect behavior for our task.
\end{itemize}

The core principle of a security proof in this paradigm is to show that for any possible attack by a malicious server in the real world, there exists a \textit{simulator} that can interact with the ideal world functionality $\mathcal{F}_{\text{DQC}}$ in such a way that the client's experience is computationally or statistically indistinguishable from her experience in the real world. 
If no algorithm (a ``distinguisher") can tell whether it is interacting with the real protocol or the simulated ideal one, then the real protocol is considered as secure as the ideal functionality itself.

\subsection{The ideal functionality \texorpdfstring{$\mathcal{F}_{\text{DQC}}$}{F\_DQC}}

We formally define the ideal functionality for our delegated quantum computation task as a black box that interacts with a client $\mathcal{C}$ and a server $\mathcal{S}$ according to the following strict rules:

\begin{itemize}
    \item \textbf{Client input:} The client submits her private inputs—a quantum state $\ket{\psi}$ and a classical description of a unitary computation $\mathrm{U}$—directly to $\mathcal{F}_{\text{DQC}}$.
    
    \item \textbf{Permitted leakage to server:} Upon receiving the inputs, $\mathcal{F}_{\text{DQC}}$ computes a \textit{permitted leakage} function, $\lambda(\ket{\psi}, \mathrm{U})$. 
    This function extracts only the information that the protocol is explicitly allowed to reveal.
    For our architecture, this includes the number of physical qubits ($n$), the logical state size ($k$), an upper bound on the gate complexity of $\mathrm{U}$, and the security parameter $k_{trap}$. $\mathcal{F}_{\text{DQC}}$ sends only the result of this function to the server.
    The server learns nothing else about $\ket{\psi}$ or $\mathrm{U}$.
    
    \item \textbf{Server decision:} The server sends a single bit, $b_{cheat} \in \{0, 1\}$, to $\mathcal{F}_{\text{DQC}}$, indicating its intention to act honestly ($0$) or maliciously ($1$).
    
    \item \textbf{Output generation:} If $b_{cheat}=0$, $\mathcal{F}_{\text{DQC}}$ computes $\ket{\psi'} = \mathrm{U}\ket{\psi}$ and delivers it to the client. If $b_{cheat}=1$, $\mathcal{F}_{\text{DQC}}$ discards the computation and delivers a standard, fixed error state $\ket{\text{err}}$ to the client. 
    The server has no ability to influence the final state beyond forcing this specific abort condition.
\end{itemize}
This ideal functionality perfectly captures the desired security goals: the server learns almost nothing, and it cannot produce an incorrect output without the client being notified.

\subsection{Formal security definitions}

A real-world DQC protocol is considered secure if it satisfies the following three fundamental properties, which are standard adaptations of security definitions from the universal composability and abstract cryptography literature \cite{canetti2001universally, maurer2011abstract, Maurer_2012}.
The parameters $\varepsilon_{comp}$, $\varepsilon_{blind}$, and $\varepsilon_{verif}$ are security parameters, which must be negligible functions of a computational security parameter (e.g., the size of the problem).

\subsubsection*{1. Completeness (correctness for honest parties)}
A protocol is complete if, when all parties execute it honestly, the final state of the system is indistinguishable from the ideal outcome. 
Formally, let $\rho_{\text{Client, Server}}^{\text{REAL}}$ be the joint state of the client and the (honest) server at the end of the real protocol. 
Let $\rho_{\text{Client, Server}}^{\text{IDEAL}}$ be the ideal final state, where the client holds the correct output $U\ket{\psi}$ and the server holds the correct final state corresponding to an honest execution (which includes the permitted leakage). 
The protocol is $\varepsilon_{comp}$-complete if:
\begin{equation}
    D(\rho_{\text{Client, Server}}^{\text{REAL}}, \rho_{\text{Client, Server}}^{\text{IDEAL}}) \le \varepsilon_{comp}.
\end{equation}
where $D(\cdot, \cdot)$ is the trace distance. 
This ensures that the entire state of the honest system is negligibly close to perfect.

\subsubsection*{2. Blindness (client privacy)}
A protocol is blind if the server's view is independent of the client's private inputs, given the permitted leakage.
The server's view, $\text{View}_{\mathcal{S}}$, is the complete record of all quantum and classical messages it receives during the protocol execution. 
Formally, for any two different sets of client inputs, $(\ket{\psi_1}, \mathrm{U_1})$ and $(\ket{\psi_2}, \mathrm{U_2})$, that produce the same permitted leakage $\lambda(\ket{\psi_1}, \mathrm{U_1}) = \lambda(\ket{\psi_2}, \mathrm{U_2})$, the resulting distributions of the server's view, $\mathcal{V}_1$ and $\mathcal{V}_2$, must be statistically indistinguishable:
\begin{equation}
    D(\mathcal{V}_1, \mathcal{V}_2) \le \varepsilon_{blind}.
\end{equation}
This is proven by constructing a simulator, $\mathcal{SIM}$, which can generate a statistically identical view using only the leakage information.

\subsubsection*{3. Verifiability (soundness against malicious server)}
A protocol is verifiable if, for any malicious strategy employed by the server, the probability that the protocol finishes without the client aborting, but with the client's output state $\rho_{\text{out}}$ being incorrect, is negligible:

\begin{equation}
    \text{Pr}[\text{Client does not abort} \land D(\rho_{\text{out}}, \ket{\mathrm{U}\psi}\bra{\mathrm{U}\psi}) > \delta] \le \varepsilon_{verif}.
\end{equation}
for any non-negligible distance $\delta$. 
This guarantees that the server cannot trick the client into accepting a wrong answer. 
In our framework, a non-aborted output will be negligibly close to the correct state.

\section{Foundational concepts and system model}
\label{sec:foundations}

Before detailing the core protocols of our architecture, we first establish the system model and review the foundational concepts from quantum information theory upon which our framework is built.

\subsection{System model and network topology}

We consider a distributed quantum computing network composed of two distinct types of participants, defined by their roles in the security model:

\begin{enumerate}
    \item \textbf{The client (master node):} A single entity (Alice) who possesses a private quantum state and a classical description of a quantum computation she wishes to perform. We assume the client has limited quantum memory and processing capabilities, making it infeasible for her to perform the computation locally. The client is considered trusted and acts as the orchestrator of the protocol.

    \item \textbf{The server (leaf nodes):} A set of $n$ distinct quantum nodes that possess significant quantum storage and computational resources. These nodes are controlled by a single entity and are considered powerful but untrusted. The server's goal may be to learn about the client's private computation or to return an incorrect result without being detected.
\end{enumerate}

For clarity, we primarily model the network as a star topology, as depicted in Fig.~\ref{fig:system_model}.
In this model, the client acts as the central hub, connected to each of the $n$ server nodes via two types of channels:
\begin{enumerate}
    \item \textbf{Quantum channels:} Physical links capable of generating and distributing entangled Bell pairs between the client and each server node.
    \item \textbf{Authenticated classical channels:} Secure classical links used by the client to send instructions and by the server nodes to return measurement outcomes. Authentication ensures that messages cannot be forged or tampered with by an external party, though it does not prevent a malicious server from sending false information \cite{Vidick_Wehner_2023}.
\end{enumerate}
While the star topology is our primary model for its simplicity and efficiency in centralized protocols, we show in Section \ref{ssec:specs} how our architecture generalizes to any arbitrary connected network topology.

% \begin{figure}[h]
%     \centering
%     \includegraphics[width=0.75\linewidth]{images/gnrl_db_nw_v5.png}
%     \caption{
%         \textbf{The Star-Topology System Model.}
%         The architecture is modeled as a star network with two distinct participants.
%         \textbf{(1) The Client (Master Node):} A single, trusted entity at the center who orchestrates the computation.
%         \textbf{(2) The Server (Leaf Nodes):} A set of $n$ powerful but untrusted, non-communicating nodes that execute the quantum operations.
%         The Client connects to each Server node via a dedicated quantum channel (for Bell pair distribution) and an authenticated classical channel (for instructions and measurement outcomes).
%     }
%     \label{fig:system_model}
% \end{figure}
\begin{figure}[h!]
    \centering
    \begin{tikzpicture}[
        node distance=1.5cm and 1cm,
        % Styles for the nodes
        client/.style={
            rectangle, 
            draw=black!80, 
            fill=blue!20, 
            very thick, 
            rounded corners, 
            minimum height=1.2cm, 
            minimum width=2.8cm, 
            align=center
        },
        server/.style={
            rectangle, 
            draw=black!80, 
            fill=gray!20, 
            very thick, 
            rounded corners, 
            minimum height=1.2cm, 
            minimum width=2.8cm, 
            align=center
        },
        % Styles for the two types of channels
        quantum_channel/.style={snake=coil, segment aspect=0, segment length=5pt, line after snake=0pt, very thick, blue!60!black},
        classical_channel/.style={->, very thick, >=stealth, black!70}
    ]

    % --- Place the Nodes ---
    % Client Node at the top
    \node[client] (M) {Client \\ \textbf{Master Node}};

    % Server Nodes below
    \node[server, below=4cm of M, xshift=-4cm] (L0) {Server \\ Leaf Node $L_0$};
    \node[server, right=of L0]                  (L1) {Server \\ Leaf Node $L_1$};
    \node[right=of L1]                         (Ldots) {\dots};
    \node[server, right=of Ldots]               (Ln) {Server \\ Leaf Node $L_{n-1}$};

    % --- Draw the Connections (as two parallel lines) ---
    \foreach \nodename in {L0, L1, Ldots, Ln} {
        \draw[quantum_channel] (M.south) -- ([yshift=2pt]\nodename.north);
        \draw[classical_channel] (M.south) -- ([yshift=-2pt]\nodename.north);
    }
    
    % --- Create the Legend ---
    \begin{scope}[xshift=5.25cm, yshift=0.1cm] % Position the legend to the right
        \node[draw, rounded corners, inner sep=5pt] (legend) {
            \begin{tabular}{@{}ll}
                % Quantum Channel entry
                \tikz{\draw[quantum_channel] (0,0) -- (1,0);} & Quantum channel \\
                % Classical Channel entry
                \tikz{\draw[classical_channel] (0,0) -- (1,0);} & Authenticated\\ & classical channel \\
            \end{tabular}
        };
    \end{scope}

    \end{tikzpicture}
    % \caption{
    %     \textbf{The Star-Topology System Model.}
    %     The architecture is modeled with two distinct participants: a central, trusted \textbf{Client (Master Node)} and a distributed, untrusted \textbf{Server} composed of $n$ non-communicating \textbf{Leaf Nodes}. As shown in the legend, the Client connects to each Server node via two channels: a dedicated quantum channel for Bell pair distribution and an authenticated classical channel for instructions and measurement outcomes.
    % }
        \caption{
        \textbf{The Star-topology system model.}
        The architecture is modeled as a star network with two distinct participants.
        \textbf{(1) The client (master node):} A single, trusted entity at the center who orchestrates the computation.
        \textbf{(2) The server (leaf nodes):} A set of $n$ powerful but untrusted, non-communicating nodes that execute the quantum operations.
        The client connects to each server node via a dedicated quantum channel (for Bell pair distribution) and an authenticated classical channel (for instructions and measurement outcomes).
    }
    \label{fig:system_model}
\end{figure}

\subsection{Quantum error correction preliminaries}

Quantum states are fragile and susceptible to errors from environmental decoherence and imperfect gate operations, posing a significant challenge to reliable quantum computation. 
Quantum error correction (QEC) is a set of techniques designed to combat these faults by introducing redundancy, a principle that is foundational to all of classical and quantum error correction \cite{nielsen2002quantum}.

The central principle of QEC is to encode a single unit of quantum information, a logical qubit, into the highly entangled state of multiple, redundant physical qubits. 
This encoding maps the two-dimensional logical space into a protected subspace of the much larger $2^n$-dimensional Hilbert space of the $n$ physical qubits. 
This protected subspace is known as the \textit{codespace}. 
The key property of a QEC code is that the effects of different, common physical errors (e.g., a Pauli $\mathrm{X}$, $\mathrm{Y}$, or $\mathrm{Z}$ error on a single qubit) map the codespace to orthogonal, distinguishable subspaces.

By performing a projective measurement that distinguishes between these error subspaces—a process known as \textit{syndrome measurement}—it is possible to identify the error that occurred without disturbing the encoded logical information itself. 
Based on the classical syndrome obtained from this measurement, a corrective operation (typically a Pauli operator) can be applied to reverse the error and restore the state to the original codespace.

\subsection{The stabilizer formalism}

The stabilizer formalism is a powerful mathematical framework for describing and working with a large and important class of QEC codes, known as stabilizer codes \cite{gottesman1997stabilizer}. 
The codes used in our architecture are part of this class.
% including those used in our architecture.

A stabilizer code is defined by its \textit{stabilizer group}, $S$, which is an Abelian subgroup of the $n$-qubit Pauli group $\mathcal{P}_n$ such that $\mathrm{-I} \notin S$. 
The codespace, $\mathcal{C}$, is the subspace of the $n$-qubit Hilbert space that is simultaneously stabilized (i.e., has an eigenvalue of +1) by every operator in the group:
\begin{equation}
    \mathcal{C} = \{ \ket{\psi} \,|\, s\ket{\psi} = \ket{\psi}, \, \forall s \in S \}.
\end{equation}

A stabilizer group is typically described by a smaller set of independent and commuting operators, $\{s_1, \dots, s_{n-k}\}$, known as the \textit{generators}. 
These generators are sufficient to define the entire group $S = \braket{s_1, \dots, s_{n-k}}$. 
A code described by $n-k$ independent generators encodes $k$ logical qubits into $n$ physical qubits and is denoted as an $[[n,k]]$ code. 

The power of a code is quantified by its \textit{distance}, $d$. 
The distance is the minimum weight (number of non-identity terms) of a Pauli operator that commutes with all stabilizers but is not itself a stabilizer (i.e., a logical operator). 
Such a code can detect any $d-1$ arbitrary physical errors and correct any $t = \lfloor(d-1)/2\rfloor$ arbitrary physical errors \cite{nielsen2002quantum}.

Finally, quantum computation on the encoded data is performed using \textit{logical operators}. 
A logical Pauli operator ($\mathrm{X_{L}}, \mathrm{Y_L}, \mathrm{Z_L}$) is an $n$-qubit Pauli operator that preserves the codespace (commutes with all elements of $S$) but acts non-trivially upon the encoded logical information (is not an element of $S$). 
This formalism provides the complete mathematical foundation for constructing the distributed logical states in our architecture and for defining the logical gate operations that manipulate them.

\section{Core architectural protocols}
\label{sec:core_protocols}

Our architecture is built upon a hierarchy of protocols that manage the quantum state at different levels. 
At the lowest level are the physical primitives that enable non-local operations. 
These primitives are then composed to build the protocols that define the architecture's core functionality. 
At the global level, we use a distributed stabilizer code to encode the logical state across the entire network. 
At the local level, each node employs a second layer of encoding to protect its physical qubits against noise. 
Finally, a set of protocols can be executed to implement logical unitary operations. 

% In this section

\subsection{The SCST Primitive for non-local controlled gates}
\label{ssec:SCST_primitve}

The fundamental building block for creating entanglement between the client (master node) and a remote server (leaf node) is the single-control single-target (\texttt{SCST}) gate protocol. 
This protocol, first detailed in works such as Eisert, et al. \cite{eisert2000optimal}, implements a generic controlled-unitary operation between a control qubit at the client and a target qubit at a server node. 
Its circuit diagram is shown in Fig.~\ref{fig:scst_primitive}.

The mechanism is based on gate teleportation. 
It requires a pre-shared Bell pair ($\ket{\Phi^+}$) between the client and the server node as a quantum resource. 
The protocol then proceeds with a sequence of local operations and two bits of classical communication to apply the unitary $\mathrm{U}$ to the target qubit, conditioned on the state of the control qubit. 
Each execution of the \texttt{SCST} protocol consumes one Bell pair.

\begin{figure}[h]
    \centering
    \centering
    \begin{center}
    \begin{tikzpicture}
    \node[scale=1]{
    $$
    \begin{quantikz}
        \lstick{\Large $\ket{\psi}_{A}$}  & \ctrl{1} & & & & \gate{\mathrm{\text{\Large Z}}}&\\
        \lstick[2]{\Large $\ket{\phi}^{+}_{A_{e}B_{e}}$} & \targ{} &  \meter{} \wire[d][1]{c}\\
        & &\targ{} &   \ctrl{1} &\gate{\mathrm{\text{\Large H}}} &\meter{}\wire[u][2]{c}\\
        \lstick{\Large $\ket{\psi}_{B}$}  &  && \gate{\mathrm{\text{\Large U}}} \wire[u][1]{c}&&&\\
    \end{quantikz}
    $$
    };
    \end{tikzpicture}
    \end{center}
    \caption{The \texttt{SCST} primitive for non-local controlled-unitary gates, based on the protocol from \cite{eisert2000optimal}. This circuit consumes one pre-shared Bell pair to execute a controlled-unitary gate between a control qubit at node A and a target qubit at node B.}
    \label{fig:scst_primitive}
\end{figure}

This primitive is the essential workhorse of our architecture, invoked in every instance where the client needs to perform a controlled operation on a remote qubit.

\subsection{Level 1 encoding: Distributed stabilizer codes}

The foundation of our architecture is a method for preparing the logical zero state, $\ket{0\dots0}_L$, of an $[[n,k]]$ stabilizer code.
% across the $n$ distributed leaf nodes.
This is achieved by distributing the encoding across $n$ physical qubits, $\{Q_0, Q_1, \dots, Q_{n-1}\}$, where each qubit $Q_j$ is physically held by the corresponding leaf node $L_j$.
Our protocol is a distributed analogue of the standard measurement-based projection method for encoding stabilizer codes, the details of which are reviewed in Appendix~\ref{app:std_encoding}. 
An abstract circuit diagram of our distributed implementation is shown in Fig.~\ref{fig:distributed_encoding_circuit}.

\begin{figure*}
    \centering
    \includegraphics[width=\linewidth]{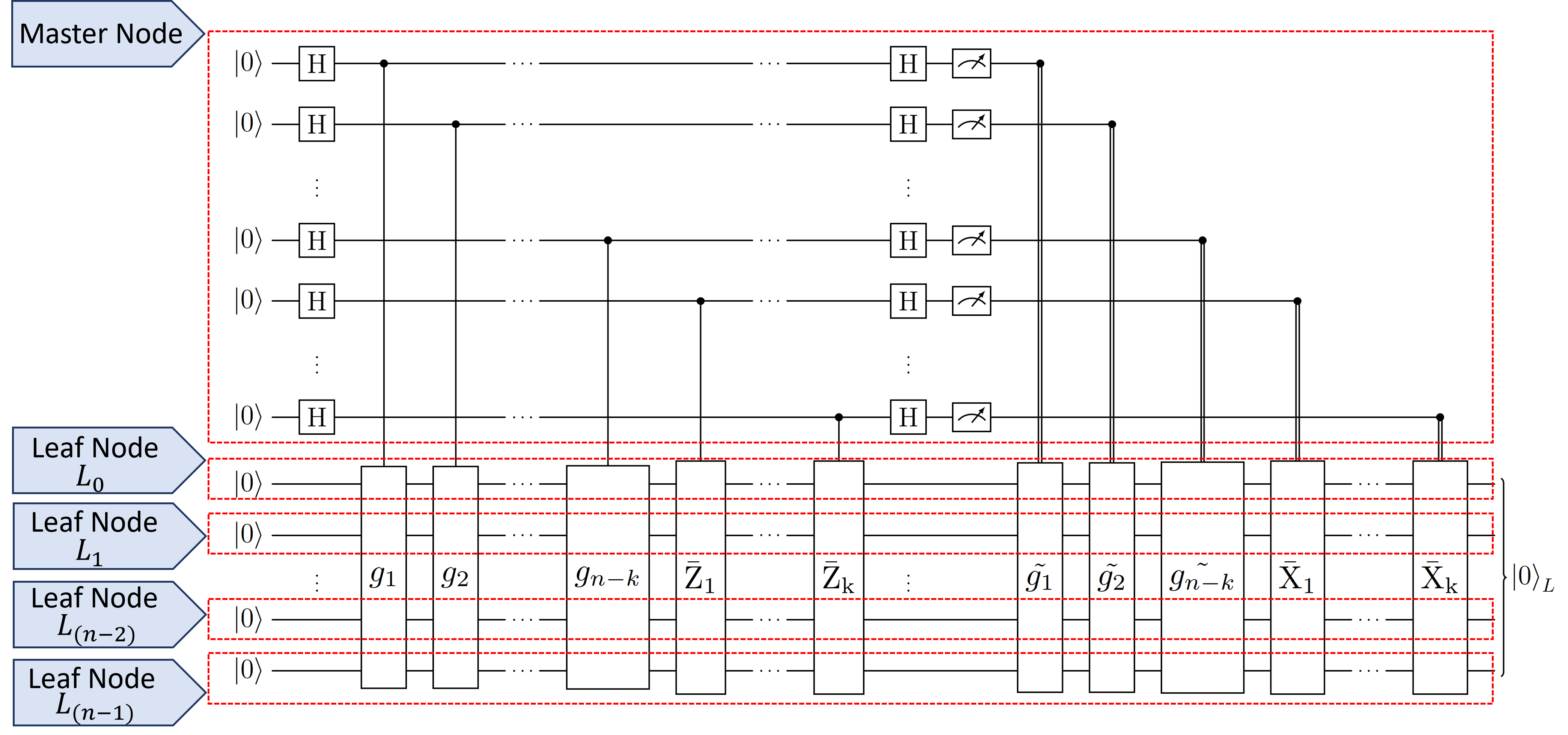}
% \vspace{-0.3cm}

    \caption{
     \textbf{Abstract circuit for distributed stabilizer encoding.}
    This diagram illustrates the logical flow of Algorithm~\ref{alg:dist_encoding}. The top wires represent the client's (master node) local syndrome qubits, while the bottom wires represent the data qubits at each of the $n$ server (leaf) nodes. Each vertical set of controlled gates corresponds to the enforcement of a single stabilizer generator ($g_i$) or logical operator ($\mathrm{\bar{Z}_j}$) via a series of non-local \texttt{SCST} operations.
    }
    \label{fig:distributed_encoding_circuit}
\end{figure*}

The procedure begins with the client (master node) preparing its local qubits in the $\ket{+}$ state. 
These qubits will function as the ancillas for the projective measurements. 
The client then simulates the entangling part of the measurement circuit by executing a series of controlled-Pauli gates between its local qubits and the remote data qubits $\{Q_j\}$ at the server (leaf nodes). 
Each of these non-local gates is implemented using the \texttt{SCST} primitive, as discussed in the previous subsection.

After all entangling operations are complete, the client measures its local qubits in the $\mathrm{X}$-basis. 
This measurement is mathematically equivalent to completing the projective measurements in the standard protocol and yields a classical bit string, the syndrome $m$. 
The purpose of the final protocol steps is to use this syndrome to apply a state correction. 
The key property that the correction operator, $P_{corr}$, must satisfy is that it anti-commutes with every generator corresponding to a `-1' measurement outcome ($m_i=1$) and commutes with all others. 
This deterministically flips the incorrect eigenvalues to `+1', finalizing the state preparation. 
The existence of such a Pauli operator for every possible syndrome is a guaranteed feature of stabilizer codes \cite{nielsen2002quantum}.
This entire distributed procedure is formally described in Algorithm~\ref{alg:dist_encoding}.

\begin{algorithm}[H]
    \caption{Distributed stabilizer code encoding.}
    \label{alg:dist_encoding}
    \begin{algorithmic}[1]
        \Require 
        A set of $n$ leaf nodes $\{L_0, \dots, L_{n-1}\}$ and one master node $M$.
        An $[[n,k]]$ stabilizer code defined by its stabilizer group generators $S = \{s_1, \dots, s_{n-k}\}$ and logical $\mathrm{Z}$ operators, $\mathrm{Z_L} = \{\mathrm{\bar{Z}_1}, \dots, \mathrm{\bar{Z}_k}\}$.
        A pre-computed set of correction operators $C$.

        \Ensure The $n$ data qubits $\{Q_j\}$, each held by a leaf node $L_j$, collectively store the logical state $\ket{0\dots0}_L$.
        % The $n$ leaf nodes collectively store the logical state $\ket{0\dots0}_L$.
        \Statex 
        \Statex \textbf{Phase 1: Initialization}
        \State The master node $M$ initializes $n$ syndrome qubits $\{q_0, \dots, q_{n-1}\}$ in the state $\ket{+}^{\otimes n}$.
        \State Each leaf node $L_j$ initializes its data qubit $\{Q_j\}$ in the state $\ket{0}$.

        \Statex
        \Statex \textbf{Phase 2: Entangling operations}
        \State Let the set of operators to be enforced be $\mathcal{G} = S \cup \mathrm{Z_L}$. Let $g_i \in \mathcal{G}$ be the operator associated with master syndrome qubit $q_i$.
        \For{each operator $g_i \in \mathcal{G}$}
            \State Decompose $g_i$ into its tensor product of Pauli operators: $g_i = \bigotimes_{j=0}^{n-1} P_j$, where $P_j \in \{\mathrm{I, X, Z, Y}\}$.
            \For{$j=0$ to $n-1$}
                \If{$P_j \neq I$}
                    \State Execute a controlled-$P_j$ gate with master qubit $q_i$ as control and leaf data qubit $Q_j$ as target using the non-local \texttt{SCST} protocol (Fig.~\ref{fig:scst_primitive}).
                \EndIf
            \EndFor
        \EndFor

        \Statex
        \Statex \textbf{Phase 3: Measurement and correction}
        \State The master node measures all its syndrome qubits $\{q_0, \dots, q_{n-1}\}$ in the $\mathrm{X}$-basis, yielding a classical measurement string $m = (m_0, m_1, \dots, m_{n-1})$.
        \State Based on the string $m$, the master node identifies the necessary correction operator $P_{corr} \in C$.
        \State The master node communicates the required corrections to the relevant leaf nodes, which apply them locally to their data qubits $Q_j$.
    \end{algorithmic}
\end{algorithm}

Upon successful completion of the protocol, the joint state of the $n$ data qubits at the server nodes is the desired logical zero state, $\ket{0\dots0}_L$. 
This state is, by construction, the simultaneous `+1' eigenstate of all stabilizer generators and all logical $\mathrm{Z}$ operators. 
It is now a high-fidelity, shared quantum resource, ready to be manipulated by the logical gate operations detailed in the subsequent sections. 
A full algebraic proof of correctness for this protocol is provided in Appendix~\ref{apndx:algo_1_proof}.

\subsection{Level 2 encoding: Local, noise-aware error correction}

A key innovation of our architecture is a second level of encoding designed to protect against local errors occurring within an individual leaf node. 
After the global logical state is established via level 1 encoding, each physical data qubit $Q_i$, held by leaf node $L_i$, is itself a precious resource susceptible to local decoherence. 
To enhance robustness against such faults, we employ a concatenated coding strategy, as illustrated in Fig.~\ref{fig:hierarchical_encoding}.

\begin{figure}[h]
    \centering
          \resizebox{\textwidth}{!}{%
  \begin{tikzpicture}[
      node distance = 2cm and 3cm,
      every node/.style = {font=\footnotesize, align=center},
      arrow/.style      = {thick,-{Stealth}},
      level1/.style     = {draw,fill=orange!30,thick,minimum width=3cm,
                           minimum height=1cm,rounded corners},
      level2/.style     = {draw,fill=green!30,thick,minimum width=3cm,
                           minimum height=1cm,rounded corners},
      dashedbox/.style = {draw,dashed,rounded corners,inner sep=6pt}
  ]

  % Initial state
  \node[draw,thick,fill=magenta!20,rounded corners,
        minimum width=3cm] (init) {Initial state\\$|\psi\rangle$};

  % Level-1 column
  \node[level1,right=4cm of init] (Q0) {Physical qubit $Q_0$\\at leaf node $L_0$};
  \node[level1,below=0.8cm of Q0] (Q1) {Physical qubit $Q_1$\\at leaf node $L_1$};
  \node[below=0.8cm of Q1,font=\Large] (Qdots) {$\vdots$};
  \node[level1,below=0.8cm of Qdots] (Qn) {Physical qubit $Q_{\,n-1}$\\at leaf node $L_{n-1}$};

  \node[above=0.8cm of Q0] (L1Label) {\textbf{Level 1: Global encoding}\\
           \footnotesize(Distributed stabilizer code)};
  \node[dashedbox,fit={(Q0)(Q1)(Qdots)(Qn)(L1Label)}] {};

  % Level-2 column
  \node[level2,right=4cm of Q0] (P0) {Local QEC block\\$\{Q_{00},Q_{01},\ldots\}$};
  \node[level2,below=0.8cm of P0] (P1) {Local QEC block\\$\{Q_{10},Q_{11},\ldots\}$};
  \node[below=0.8cm of P1,font=\Large] (Pdots) {$\vdots$};
  \node[level2,below=0.8cm of Pdots] (Pn) {Local QEC block\\$\{Q_{\,(n-1)0},\ldots\}$};

  \node[above=0.8cm of P0] (L2Label) {\textbf{Level 2: Local encoding}\\
           \footnotesize(Concatenated QEC)};
  \node[dashedbox,fit={(P0)(P1)(Pdots)(Pn)(L2Label)}] {};

  % Title above centered
  \node[draw,thick,rounded corners=4pt,fill=gray!20,
        text width=8cm,align=center,
        above=3cm of $(init)!0.5!(P0)$] (title)
        {\textbf{Hierarchical encoding architecture}};

  % Arrows: init → Level-1
  \draw[arrow] (init.east) -- ++(1,0) |- (Q0.west)
               node[midway,above,font=\scriptsize]{Algorithm 1};
  \foreach \dest in {Q1,Qn}
    \draw[arrow] (init.east) -- ++(1,0) |- (\dest.west);

  % Arrows: Level-1 → Level-2
  \foreach \i/\j in {Q0/P0,Q1/P1,Qn/Pn}
    \draw[arrow] (\i.east) -- node[midway,above,font=\scriptsize]{Local QEC} (\j.west);

  \end{tikzpicture}%
  }
    \caption{
        \textbf{The hierarchical two-level encoding architecture.} 
        This diagram illustrates the concatenated coding strategy. 
        \textbf{Level 1 (global encoding):} The initial state $\ket{\psi}$ is first encoded into a logical state distributed across $n$ physical qubits $\{Q_0, \dots, Q_{n-1}\}$ using Algorithm~\ref{alg:dist_encoding}, where each qubit $Q_i$ resides at a separate server node. 
        \textbf{Level 2 (local encoding):} Each of these individual physical qubits $Q_i$ is then further encoded into a block of new physical qubits $\{Q_{i0}, Q_{i1}, \dots\}$ using a local QEC scheme. This allows each node to handle local errors autonomously.
    }
    \label{fig:hierarchical_encoding}
\end{figure}

Specifically, the single data qubit $Q_i$ at each leaf node is further encoded using a local quantum error correction (QEC) code, effectively replacing it with a small, self-contained block of physical qubits at that node. 
This hierarchical approach allows each leaf node to autonomously detect and correct a class of local physical errors. 
The primary technical advantage of this design is the significant reduction in reliance on frequent, resource-intensive, network-wide stabilizer measurements of the level 1 code for every minor physical error event. 
This enhances the overall modularity and efficiency of the architecture by localizing error management.

The choice of this local (level 2) code can be tailored based on the specific noise characteristics and resource availability at the leaf nodes. 
We now present two such custom procedural QEC schemes designed for different noise environments.

\subsubsection{Method 1: A 4-qubit code for arbitrary single-qubit errors}
For scenarios where ancillary qubits can be prepared and maintained with very high fidelity (a ``perfect ancilla" model), we designed a procedural QEC scheme (depicted in Fig.~\ref{fig:qec_method_1}) that uses three such perfect ancillas to protect one data qubit. 
This provides complete protection against any single error on the most critical qubit in this idealized noise model. 
A full algebraic correctness proof for this scheme is provided in Appendix~\ref{app:qec_proofs}.

\begin{figure}[h]

    \centering
	\begin{center}
		\begin{tikzpicture}
			\node[scale=0.75]{
\begin{quantikz}
\lstick{\LARGE $\ket{\psi}$} & \ctrl{2} & \gate{\mathrm{\text{\Large H}}} & \ctrl{1} & \gate[1]{\mathrm{\text{\Large Error}}}\gategroup[4,steps=1,style={dashed,rounded
corners,fill=blue!20, inner xsep=2pt},background,label style={label position=below,anchor=north,yshift=-0.2cm}]{} & \ctrl{1} & \targ{}   & \gate{\mathrm{\text{\Large H}}} & \ctrl{2} & \targ{}   & \targ{} & \rstick{\LARGE $\ket{\psi}$}\\
\lstick{\LARGE $\ket{0}$}        &          &          & \targ{}  &                 & \targ{}  & \ctrl{-1} &          &          &           & &\\
\lstick{\LARGE $\ket{0}$}        & \targ{}  & \gate{\mathrm{\text{\Large H}}} & \ctrl{1} &                 & \ctrl{1} & \targ{}   & \gate{\mathrm{\text{\Large H}}} & \targ{}  & \ctrl{-2} & \ctrl{-2} &\\
\lstick{\LARGE $\ket{0}$}        &          &          & \targ{}  &                 & \targ{}  & \ctrl{-1} &          &          &           & \ctrl{-3} &
\end{quantikz}
 };
\end{tikzpicture}
\end{center}
% \vspace{-0.8cm}
\caption{\textbf{Circuit for the 4-qubit local QEC scheme (Method 1).}
        This procedural circuit is designed to protect one data qubit using three ancillary qubits. It can deterministically detect and correct any single Pauli error ($\mathrm{X}$, $\mathrm{Y}$ or $\mathrm{Z}$) that occurs on the data qubit. This scheme operates under the idealized assumption that the ancillary qubits themselves are error-free.}
\label{fig:qec_method_1}
\end{figure}

\subsubsection{Method 2: A 6-qubit Scheme Optimized for biased noise}

Many physical quantum platforms exhibit biased noise, where the probability of certain error types (e.g., dephasing/$\mathrm{Z}$-errors) is orders of magnitude higher than others (e.g., bit-flips/$\mathrm{X}$-errors). 
This has been studied extensively in systems like superconducting cat qubits and certain neutral-atom qubits \cite{guillaud2019repetition, cong2022hardware}. 

To leverage this physical reality, we designed a second QEC scheme (Fig.~\ref{fig:qec_method_2}) tailored for such a noise model where $\mathrm{X}$ and $\mathrm{XZ}$ errors are dominant.
This 6-qubit protocol makes a deliberate engineering trade-off: it provides perfect, deterministic correction for the most probable errors ($\mathrm{X}$ and $\mathrm{XZ}$) in exchange for a probabilistic response to less likely errors ($\mathrm{Z}$).
\begin{itemize}
    \item \textbf{Full correction for dominant errors:} The protocol detects and corrects any single $\mathrm{X}$ or $\mathrm{XZ}$ error on any of the six involved qubits.
    \item \textbf{Flagging for less-likely errors:} A single $\mathrm{Z}$ error produces a unique syndrome (\texttt{00100}) that acts as a ``flag". 
    % This flag signals the occurrence of an unexpected error type and can trigger higher-level fault-tolerance procedures.
\end{itemize}
When a node measures this flag, it knows that a non-standard error occurred and can report this failure to the client's higher-level fault-tolerance routines. 
This noise-aware approach allows for robust protection with fewer physical qubits than would be required by a general-purpose code designed to handle all error types equally.
The correctness of this scheme is also formally proven in Appendix~\ref{app:qec_proofs}. 

\begin{figure}
    \centering
\begin{center}
\begin{tikzpicture}
\node[scale=0.75]{
$$
\begin{quantikz}
\lstick{\LARGE $\ket{\psi}$} & \ctrl{3} & \gate{\mathrm{\text{\Large H}}} & \ctrl{1} & \ctrl{2} & \gate[6, disable auto height]{\mathrm{\text{\LARGE \verticaltext{Error}}}}\gategroup[6,steps=1,style={dashed,rounded
corners,fill=blue!20, inner
xsep=2pt},background,label style={label
position=below,anchor=north,yshift=-0.2cm}]{{\sc
}} & \ctrl{1} & \ctrl{2} & \targ{}   & \gate{\mathrm{\text{\Large H}}}  & \ctrl{3} & \targ{}   &\targ{}   &\targ{}   &\targ{}   & \rstick{\LARGE $\ket{\psi}$}\\
\lstick{\LARGE $\ket{0}$}    & &   & \targ{}  & &                 & \targ{}  && \ctrl{-1} & &         &          &           & & & \\
\lstick{\LARGE $\ket{0}$}    & &   &  & \targ{}  &                & & \targ{}  & \ctrl{-2} & &         &          &           & & & \\
\lstick{\LARGE $\ket{0}$} & \targ{} & \gate{\mathrm{\text{\Large H}}} & \ctrl{1} & \ctrl{2} &  & \ctrl{1} & \ctrl{2} & \targ{}   & \gate{\mathrm{\text{\Large H}}}  & \targ{} & \ctrl{-3}   & \ctrl{-3}   & \ctrl{-3}   & \ctrl{-3}   &\\
\lstick{\LARGE $\ket{0}$}    & &   & \targ{}  & &                 & \targ{}  && \ctrl{-1} & &        &          &            \ctrl{-4} & & \ctrl{-4} &\\
\lstick{\LARGE $\ket{0}$}    & &   &  & \targ{}  &                & & \targ{}  & \ctrl{-2} & &        &          &            \ctrl{-5} & \ctrl{-5} & & 
\end{quantikz}
$$
};
\end{tikzpicture}
\end{center}
% \vspace{-0.8cm}
    \caption{\textbf{Circuit for the 6-qubit biased-noise QEC scheme (Method 2).}
        This procedural circuit is engineered for a biased noise model where Pauli $\mathrm{X}$ and $\mathrm{Y}$ errors are the dominant faults. 
        It uses five ancillary qubits to protect one data qubit and provides deterministic correction for any single $\mathrm{X}$ or $\mathrm{Y}$ error on any of the six qubits. 
        In contrast, a single Pauli $\mathrm{Z}$ error, being less probable in this model, produces a unique, unambiguous ``flag" syndrome (\texttt{00100}) that signals a non-correctable error event.}
    \label{fig:qec_method_2}
\end{figure}

Should a node not exhibit a known noise bias, or if universal error protection is required, our architecture's flexibility allows for the incorporation of standard, general-purpose codes. 
This modularity extends to the network level: different QEC methods can be deployed on individual leaf nodes to match their specific hardware requirements. 
For instance, a leaf node could employ the 9-qubit Shor code to protect against any arbitrary single-qubit error, albeit at the cost of a higher qubit overhead \cite{shor1995scheme,nielsen2002quantum}. 
This adaptability to heterogeneous hardware, allowing a mix of specialized and general-purpose codes, is a core feature of our architecture's versatility.

\subsection{Executing gates between non-adjacent nodes}
\label{ssec:algo2}

The ability to perform two-qubit gates between arbitrary, non-adjacent nodes is essential for both the encoding process and for executing general unitary operations (explained in the next subsection). 
Our architecture achieves this using a single general protocol for a distributed controlled-$\mathrm{U}$ operation, formally detailed in Algorithm~\ref{alg:dist_cu_op}.

The scenario, depicted in Fig.~\ref{fig:non_local_gate_protocol}, involves three nodes: a control node A, an intermediary node B, and a target node C. 
Node A holds the control data qubit $\ket{\psi}_A$, node C holds the target data qubit $\ket{\psi}_C$, and all three nodes possess ancillary qubits for the protocol. 
The protocol enables a direct logical interaction between A and C by using B as a quantum relay. 
The mechanism is an application of quantum teleportation principles, built in two distinct phases. It leverages entanglement swapping \cite{zukowski1993event} to create a direct channel, followed by gate teleportation (The \texttt{SCST} Primitive) to execute the operation.

% \begin{figure}[h]
%     \centering
%     \includegraphics[width=0.9\linewidth]{images/3-qu-nodes_v3.png}
%     \caption{The three-node scenario for a non-adjacent gate operation. The protocol enables a direct logical interaction between the control (A) and target (C) nodes by using the intermediary node (B) as a quantum relay.}
%     \label{fig:non_adjacent_nodes}
% \end{figure}

\begin{figure}[h!]
    \centering
    \resizebox{\textwidth}{!}{%
    \begin{tikzpicture}[
        node distance=3cm,
        client/.style={
            rectangle, 
            draw=black!80, 
            fill=blue!20, 
            very thick, 
            rounded corners, 
            minimum height=1.2cm, 
            minimum width=2.8cm, 
            align=center
        },
        server/.style={
            rectangle, 
            draw=black!80, 
            fill=gray!20, 
            very thick, 
            rounded corners, 
            minimum height=1.2cm, 
            minimum width=2.8cm, 
            align=center
        },
        quantum_channel/.style={snake=coil, segment aspect=0, segment length=5pt, line after snake=0pt, very thick, blue!60!black},
        classical_channel/.style={->, very thick, >=stealth, black!70},
        phase_label/.style={font=\bfseries\large, anchor=west},
        result_label/.style={font=\itshape, align=center, text width=4cm}
    ]

    % --- Phase 1: Entanglement Swapping ---
    \node[phase_label] (P1_label) at (2, 2) {Phase 1: Entanglement swapping};
    
    \node[server] (A1) {Node A (Control)};
    \node[server, right=of A1] (B1) {Node B (Intermediary)};
    \node[server, right=of B1] (C1) {Node C (Target)};
    
    \node[below=0.3cm of B1, circle, draw, inner sep=1pt, fill=white] (BSM) {BSM};
    
    % \draw[quantum_channel] (A1.east) to[bend left=25] (B1.west);
    \draw[quantum_channel] (A1.east) to  (B1.west);
    \draw[classical_channel] (A1.east) to  (B1.west);
    \draw[quantum_channel] (B1.east) to (C1.west);
    \draw[classical_channel] (B1.east) to (C1.west);
    
    \draw[classical_channel] (BSM) -> node[midway, below, font=\small] {$(m_1, m_2)$} (C1.south);
    
    % \node[result_label, below=1.2cm of B1] (Result1) {Result: A direct entangled link $\ket{\Phi^+}_{A_e, C_e}$ is created.};
    \node[result_label, text width=7cm, below=1.2cm of B1] (Result1) {Establishes direct link: $\ket{\Phi^+}_{A_e, C_e}$};
    
    % --- Phase 2: Gate Teleportation (SCST) ---
    \node[phase_label] (P2_label) at (2, -3.25) {Phase 2: Gate teleportation (\texttt{SCST})};

    \node[server, below=4.5cm of A1] (A2) {Node A \\ $\ket{\psi}_A$};
    \node[server, below=4.5cm of C1] (C2) {Node C \\ $\ket{\psi}_C$};
    
    % The long-distance resource link 
    \draw[quantum_channel] (A2.east) to node[midway, above, font=\small] {Resource: $\ket{\Phi^+}_{A_e, C_e}$} (C2.west);
    
    % Classical communication for gate teleportation
    \draw[classical_channel] (A2.east) to[bend left=25] node[midway, above, font=\small] {$m_{A_e}$} (C2.west);
    \draw[classical_channel] (C2.west) to[bend left=25] node[midway, below, font=\small] {$m_{C_e}$} (A2.east);

    % \node[result_label, below=1.2cm of A2, xshift=3cm] (Result2) {Final State: $\ControlledU{A}{C}(\ket{\psi}_A \otimes \ket{\psi}_C)$};
    \node[result_label, text width=7cm, below=1.2cm of A2, xshift=6.5cm] (Result2) {Final state: $\ControlledU{A}{C}(\ket{\psi}_A \otimes \ket{\psi}_C)$};

    \end{tikzpicture}
    }
    % Caption 
    \caption{
        \textbf{Mechanism of Algorithm~\ref{alg:dist_cu_op}.}
        The protocol operates in two distinct phases. 
        \textbf{Phase 1 (Entanglement swapping):} Two local Bell pairs are created (A-B and B-C). A Bell state measurement (BSM) at the intermediary node B consumes these pairs to create a single, long-distance entangled Bell pair between the non-adjacent nodes A and C.
        \textbf{Phase 2 (Gate teleportation):} This long-distance pair is then used as a resource for the \texttt{SCST} primitive. A sequence of local measurements and classical communication between A and C teleports the conditional logic of the gate, completing the non-local operation.
    }
    \label{fig:non_local_gate_protocol}
\end{figure}

\subsubsection*{Phase 1: Entanglement swapping}
The protocol first establishes a direct entangled Bell pair between the non-adjacent nodes A and C. This is achieved via entanglement swapping \cite{zukowski1993event}. 
Two separate Bell pairs are created (one between A-B and one between B-C). 
Node B then performs a Bell state measurement (BSM) on its two ancillary qubits and communicates the classical outcome to node C, which applies a corresponding Pauli correction. 
This procedure consumes the two short-distance Bell pairs to create a single, long-distance entangled resource, $\ket{\Phi^+}_{A_e C_e}$, between A and C.

\subsubsection*{Phase 2: Gate teleportation via the SCST primitive}

With the direct entangled link established, the protocol then uses it as a resource to teleport the conditional logic of the gate. 
This second phase is precisely the \texttt{SCST} primitive, as explained in the previous Subsection \ref{ssec:SCST_primitve}. 
It uses the newly created long-distance Bell pair as a resource to perform the gate teleportation protocol, completing the non-local controlled-unitary operation.

\begin{algorithm}[H]
    \caption{Distributed controlled-$\mathrm{U}$ operation between non-adjacent nodes.}
    \label{alg:dist_cu_op}
    \begin{algorithmic}[1]
        \Require 
        Control node A with data qubit $\ket{\psi}_A$ and ancilla $\ket{0}_{A_e}$.
        Intermediate node B with ancillas $\ket{0}_{B_{e1}}, \ket{0}_{B_{e2}}$.
        Target node C with data qubit $\ket{\psi}_C$ and ancilla $\ket{0}_{C_e}$.
        The unitary operation $\mathrm{U}$.

        \Ensure The final state of the data qubits is $\ControlledU{A}{C}(\ket{\psi}_{A} \otimes \ket{\psi}_{C})$.
        
        \Statex
        \Statex \textbf{Phase 1: Entanglement swapping}
        \State Create distributed Bell pairs $\ket{\Phi^+}_{A_e B_{e1}}$ and $\ket{\Phi^+}_{C_e B_{e2}}$.
        \State At node B, perform a Bell state measurement (BSM) on its ancilla qubits $(B_{e1}, B_{e2})$ and send the 2-bit classical result $(m_1, m_2)$ to node C.
        \State At node C, apply Pauli correction $\mathrm{X}^{m_2} \mathrm{Z}^{m_1}$ to ancilla $C_e$.
        \Statex \Comment{\texttt{A direct entangled link $\ket{\Phi^+}_{A_e C_e}$ is now established.}}

        \Statex
        \Statex \textbf{Phase 2: Gate teleportation (the \texttt{SCST} primitive)}
        \State At node A, measure ancilla $A_e$ in the $\mathrm{X}$-basis and send the outcome $m_{A_e}$ to node C.
        \State At node C, apply correction $\mathrm{Z}^{m_{A_e}}$ to ancilla $C_e$.
        \State At node C, execute the local gate $\ControlledU{C_e}{C}$.
        \State At node C, measure ancilla $C_e$ in the $\mathrm{Z}$-basis and send the outcome $m_{C_e}$ to node A.
        \State At node A, apply the final phase correction $\mathrm{Z}^{m_{C_e}}$ to the original data qubit $A$.
    \end{algorithmic}
\end{algorithm}

A full, step-by-step algebraic correctness proof is provided in Appendix \ref{apndx:algo_2_proof}.

\subsection{Distributed logical operations for universal computation}

Once a logical state is encoded and distributed across the server nodes, our architecture enables universal quantum computation by providing a framework for executing any arbitrary logical unitary, $\mathrm{U_L}$. 
The implementation of such a unitary is not a direct mapping but a hierarchical, multi-stage process orchestrated by the client. 
This process systematically translates any high-level quantum algorithm into a concrete sequence of executable, distributed protocols. 
The pipeline involves three primary stages: (1) classical decomposition of the target unitary into a sequence of elementary gates, (2) formal definition of the logical version of these gates, and (3) distributed implementation of the gate sequence using the physical resources of the network.

\subsubsection*{Stage 1: Classical unitary decomposition (quantum circuit synthesis)}
This initial stage is a purely classical compilation task performed by the client's controller, known as quantum circuit synthesis. 
The client takes the desired $k$-qubit unitary $U$ and decomposes it into an approximately equivalent sequence of gates from a universal gate set, typically $\{\mathrm{H, S, CNOT, T}\}$. 
The specific algorithm used depends on the size and structure of $\mathrm{U}$.

\begin{itemize}
    \item \textbf{For single-qubit unitaries:} The foundational method is the \textit{Solovay-Kitaev algorithm} \cite{kitaev1997quantum, nielsen2002quantum}. 
    This recursive algorithm provides a constructive proof that any single-qubit unitary can be approximated to an arbitrary precision $\varepsilon$ with a gate sequence of length $O(\log^c(1/\varepsilon))$.

    \item \textbf{For multi-qubit unitaries:} 
    For the crucial two-qubit case, it is a well-established result that any unitary can be exactly decomposed into a circuit requiring $\mathrm{CNOT}$ gates and a set of local single-qubit gates \cite{nielsen2002quantum}.
    The theoretical tool for this is the \textit{KAK (or Cartan) decomposition}, which provides an exact recipe for the synthesis \cite{tucci2005intro}, with efficient algorithms available to perform this task programmatically \cite{nakajima2005newalgorithm}.
    The output of such decompositions can be highly optimized; for instance, it is known that any two-qubit unitary can be constructed with at most three $\mathrm{CNOT}$ gates \cite{vidal2004universal} or a total of 23 elementary gates \cite{bullock2003arbitrary}.

    For larger, general $n$-qubit unitaries, exact methods like the \textit{Quantum Shannon Decomposition (QSD)} \cite{shende2006synthesis} provide a canonical structure.
    These decompositions reduce the problem to synthesizing the resulting single-qubit rotations, for which the Solovay-Kitaev algorithm is then used.
    Research continues into alternative decompositions, such as general programmable circuits \cite{sousa2007universal} and novel specialized algorithms for tasks like quantum simulation \cite{suri2023twounitary}, alongside theoretical work on the fundamental gate complexity and simulation costs \cite{zeier2004gate}.
\end{itemize}
The output of this classical compilation is a sequence of elementary gates, $G = (g_1, g_2, \dots, g_m)$, such that their product $\mathrm{U'} = g_m \cdot \dots \cdot g_1$ is a high-fidelity approximation of the original target unitary $\mathrm{U}$.

\subsubsection*{Stage 2: Formal definition of logical operators}
Before the gate sequence $G$ can be executed, the abstract operators must be mapped to their logical equivalents for the chosen stabilizer code. 
A physical operator $\bar{P}$ on $n$ qubits is a valid logical operator if it preserves the codespace $\mathcal{C}$ by commuting with every element of the stabilizer group $S$ ($[\bar{P}, s] = 0, \forall s \in S$), while not being a stabilizer itself ($\bar{P} \notin S$).

\begin{itemize}
    \item \textbf{Logical basis and Pauli operators:} The logical basis states are defined via the logical Pauli operators. 
    For a single logical qubit, the logical zero state $\ket{0}_L$ is the unique state in $\mathcal{C}$ that is a simultaneous $+1$ eigenstate of all stabilizer generators $\{s_i\}$ and the logical Pauli-$\mathrm{Z}$ ($\mathrm{Z_L}$) operator. 
    The logical one state is then defined as $\ket{1}_L \equiv \mathrm{X_L}\ket{0}_L$.

    \item \textbf{Definition by algebraic action:} The effect of any logical gate is most rigorously defined by its conjugation action on the logical Pauli operators. 
    A physical procedure correctly implements a logical gate if it reproduces the correct transformation on the logical Pauli frame. 
    For instance, the logical $\mathrm{CNOT}$ gate, $\text{CNOT}_L$, between a control logical qubit A and target C, is defined as any physical procedure that correctly implements the transformations:
    \begin{align}
        \mathrm{X_L}_A \to \mathrm{X_L}_A \otimes \mathrm{X_L}_C & & \mathrm{Z_L}_A \to \mathrm{Z_L}_A \otimes \mathrm{I}_C \\
        \mathrm{X_L}_C \to \mathrm{I}_A \otimes \mathrm{X_L}_C & & \mathrm{Z_L}_C \to \mathrm{Z_L}_A \otimes \mathrm{Z_L}_C
    \end{align}
\end{itemize}
The task of implementation is to find a distributed physical circuit that satisfies this algebraic contract.

\subsubsection*{Stage 3: Distributed implementation of the universal gate set}
The final stage is the distributed execution of the physical procedures corresponding to each logical gate in the compiled sequence $G$. 
For each elementary gate $g_i$, the client's controller issues the commands for its corresponding logical implementation, $g_{i,L}$. 
The method for implementing each logical gate depends on its type and the properties of the chosen stabilizer code, as detailed below.

\begin{enumerate}
    \item \textbf{Single-qubit logical Clifford gates ($\mathrm{H_L, S_L}$):}
The most efficient implementation for single-qubit Clifford gates is a transversal one, where the logical gate is achieved by applying the corresponding physical gate individually to each of the $n$ physical qubits ($\mathrm{U_L} = \bigotimes_{i=1}^n \mathrm{U_i}$). 
For many important \textit{Calderbank-Shor-Steane (CSS) codes}, such as the [[7,1,3]] Steane code , the $\mathrm{H}$ and $\mathrm{S}$ gates are transversal \cite{steane_code,nielsen2002quantum}. 
In our architecture, these are implemented with minimal overhead: the client broadcasts a single classical command, and each server node applies the physical gate locally.

In cases where a required single-qubit Clifford gate is not transversal for the chosen code, it must be synthesized from a sequence of other available logical gates. 
This is always possible because the Clifford group can be generated by a small set of base gates (e.g., $\mathrm{CNOT}$ and $\mathrm{H}$). 
The client's classical controller would use a pre-compiled, optimal-length sequence for this synthesis, which can be found using standard algorithms \cite{aaronson2004improved}. 
The resulting circuit consists entirely of operations our architecture already supports.

\item \textbf{Two-qubit logical gates ($\mathrm{CNOT_L}$):}
Logical $\mathrm{CNOT}$ gates are the primary source of entanglement and non-local operations in quantum algorithms. 
While some codes possess a transversal $\mathrm{CNOT}$ gate, a general method is required for arbitrary codes and network layouts. 
Our architecture implements a non-local $\mathrm{CNOT_L}$ by decomposing it into a non-local logical Controlled-$\mathrm{Z}$ ($\mathrm{CZ_L}$) gate surrounded by local Hadamard gates, using the well-known identity:
\begin{equation}
    \mathrm{CNOT}_{A,C} = (\mathrm{I}_A \otimes \mathrm{H}_C) \cdot \mathrm{CZ}_{A,C} \cdot (\mathrm{I}_A \otimes \mathrm{H}_C).
\end{equation}
The distributed implementation is as follows:
\begin{enumerate}
    \item The client instructs the node(s) holding the target logical qubit to apply a logical Hadamard gate, $\mathrm{H_L}$.
    \item The client coordinates the execution of a non-local $\mathrm{CZ_L}$ gate between the control and target logical qubits. This is achieved by using our provably correct distributed controlled-$\mathrm{U}$ operation (Algorithm \ref{alg:dist_cu_op}), where the unitary $\mathrm{U}$ is set to the logical Pauli $\mathrm{Z_L}$ gate.
    \item The client again instructs the target logical qubit's node(s) to apply another $\mathrm{H_L}$.
\end{enumerate}
This strategy effectively reduces the challenge of a non-local $\mathrm{CNOT}$ to the execution of a non-local $\mathrm{CZ}$ gate.

\item \textbf{Non-Clifford logical gates ($\mathrm{T_L}$):}
The ability to implement a non-Clifford logical gate, such as the $\mathrm{T_L}$ gate ($\mathrm{\pi/8_L}$ gate), is essential for achieving universal quantum computation. 
It would be ideal if such gates were transversal, but this is fundamentally impossible for any code that also has a transversal Clifford set, a result known as the celebrated \textit{Eastin-Knill Theorem} \cite{eastin2009restrictions}. 
This necessitates the use of more advanced protocols for universality.

Our architecture implements the logical $T$ gate using the state-of-the-art paradigm of \textit{Magic state injection (MSI)} \cite{bravyi2005universal}. 
The client coordinates the following distributed protocol:
\begin{enumerate}
    \item \textbf{Resource preparation:} A high-fidelity ancillary state, the ``magic state" $\mathrm{\ket{T}} = (\ket{0} + e^{i\pi/4}\ket{1})/\sqrt{2}$, is first prepared offline by a trusted factory (which could be the client or a dedicated node) and encoded into its logical equivalent, $\mathrm{\ket{T}_L}$.
    
    \item \textbf{Injection circuit:} The client coordinates a logical $\mathrm{CNOT}$ gate to entangle the data logical qubit (control) with the magic state ancilla (target).
    
    \item \textbf{Measurement:} A logical measurement is performed on the \textit{ancilla qubit} in the X-basis. The classical outcome $m \in \{0,1\}$ is communicated from the server to the client.
    \item \textbf{Correction:} The client instructs the server to apply a final, conditional Clifford correction (e.g., $(\mathrm{S_L}^\dagger)^m$) to the data logical qubit, which now holds the desired state $\mathrm{T_L\ket{\psi}_L}$.
\end{enumerate}
The cost of universality is thus concretized: each logical $\mathrm{T}$ gate consumes one expensive, high-fidelity magic state and requires the execution of a full distributed Clifford circuit.
\end{enumerate}

By composing these proven techniques, our architecture provides a complete synthesis pipeline for translating any high-level quantum algorithm, specified as a quantum circuit, into a secure execution plan compatible with our distributed architecture.
This entire pipeline for the execution of a general logical unitary is encapsulated in Algorithm~\ref{alg:logical_unitary_generalized}.
This high-level meta-protocol directs the client's controller, dispatching each elementary gate from the compiled sequence to the appropriate distributed implementation subroutine.

\begin{algorithm}[H]
    \caption{Generalized distributed execution of a logical unitary sequence.}
    \label{alg:logical_unitary_generalized}
    \begin{algorithmic}[1]
        \Require 
        From the client's classical compiler: a gate sequence $G = (g_1, g_2, \dots, g_m)$ from the universal set $\{\mathrm{H, S, CNOT, T}\}$.
        A classical description of the chosen $[[n,k]]$ stabilizer code's properties (e.g., transversality of gates).
        The system, initialized in the state $\ket{0\dots0}_L$.

        \Ensure The final state of the system is $\mathrm{U_L}\ket{0\dots0}_L$, where $\mathrm{U_L} = \prod_{i=1}^m g_{i,L}$.

        \Statex
        \For{each elementary gate $g_i$ in the sequence $G$}
            \Statex \Comment{ \texttt{The client's controller determines the optimal implementation method for $g_{i,L}$ based on the code's properties.}}
            
            \If{$g_i$ is a Clifford gate ($\mathrm{H, S, CNOT}$)}
                \If{the gate $g_{i,L}$ is transversal for the chosen code}
                    \State \textbf{Execute transversal operation:} The client instructs the relevant server nodes to apply the corresponding physical gate ($\mathrm{H, S, CNOT}$) locally to the physical data qubits.
                \Else
                    \Statex \Comment{ \texttt{Execute non-transversal Clifford gate via gadgetry or decomposition.}}
                    \If{$g_i$ is a single-qubit gate ($\mathrm{H, S}$)}
                         \State Execute a known, pre-compiled sequence of other available logical gates that synthesizes the desired gate.
                    \ElsIf{$g_i$ is a two-qubit $\mathrm{CNOT}$ gate}
                        \State \textbf{Execute via $\mathrm{CZ}$ decomposition:} Execute the sequence $\mathrm{H_L} \cdot \mathrm{CZ_L} \cdot \mathrm{H_L}$ on the target logical qubit, where the non-local $\mathrm{CZ_L}$ is implemented using \textbf{Algorithm \ref{alg:dist_cu_op}}.
                    \EndIf
                \EndIf
                
            \ElsIf{$g_i$ is a non-Clifford gate ($\mathrm{T}$)}
                \Statex \Comment{ \texttt{Execute via magic state injection protocol.}}
                \State A high-fidelity logical magic state, $\mathrm{\ket{T}_L}$, is provided by a magic state factory (e.g., the client).
                \State The client coordinates a logical $\mathrm{CNOT}$ gate with the data logical qubit as control and the $\mathrm{\ket{T}_L}$ ancilla as target.
                \State The client instructs the server to perform a logical measurement on the \textit{ancilla qubit} in the $\mathrm{X}$-basis.
                \State The classical outcome $m \in \{0,1\}$ is communicated from the server to the client.
                \State The client instructs the server to apply a final Clifford correction (e.g., $(\mathrm{S_L}^\dagger)^m$) to the data logical qubit.
                \Statex \Comment{\texttt{The data qubit is now in the state $\mathrm{T_L}\ket{\psi}_L$. This consumes one magic state.}}
            \EndIf
        \EndFor
    \end{algorithmic}
\end{algorithm}

\section{Architectural specifications and security protocols}
\label{sec:specs}

In this section, we detail the operational specifications of our architecture and describe the protocols that enable it to meet the formal security definitions of completeness, blindness, and verifiability established in Section~\ref{sec:security_framework}.

\subsection{Distributed quantum storage and computation}

The primary function of the architecture is to provide a distributed quantum state storage and computation service. The method for constructing a distributed stabilizer code (Algorithm \ref{alg:dist_encoding}) and executing logical operations (Algorithm \ref{alg:logical_unitary_generalized}) enables the creation and manipulation of a logical state that does not physically reside at the client's location. This approach offers several benefits:
\begin{itemize}
    \item \textbf{Decentralized load:} It allows a computationally limited client to leverage the superior storage and processing capabilities of the server (the leaf nodes).
    \item \textbf{Enhanced security:} By storing the quantum state in a decentralized manner across multiple, non-communicating leaf nodes, the system gains a degree of physical security. The compromise of a single leaf node is insufficient to reveal the full logical state; an adversary would need to compromise all leaf nodes and know the stabilizer code to reconstruct the information.
\end{itemize}

\subsection{Protocol for trap-based verification}
\label{ssec:verification}

nst a malicious server that may deviate from the prescribed protocols, our architecture must be verifiable. A malicious server might apply incorrect gates, use faulty ancillary states, or provide false measurement outcomes. To detect such deviations, we integrate a formal trap-based verification scheme, a standard and powerful technique in delegated quantum computation.

The protocol, detailed in Algorithm~\ref{alg:verification}, involves the client secretly embedding \textit{trap} qubits at random locations within the main computation. These traps have known initial states and are expected to undergo simple, predictable transformations (e.g., the Identity operation). By measuring the final state of these traps, the client can perform a statistical test to verify the server's honest behavior.

\begin{algorithm}[H]
    \caption{Trap-based verification for distributed protocols.}
    \label{alg:verification}
    \begin{algorithmic}[1]
        \Require 
        The client's private computational task.
        A security parameter $k_{traps}$.
        A set of $n_{data}$ physical qubits is required for the computation.
        A total of $N = n_{data} + k_{traps}$ physical leaf nodes.

        \Ensure 
        If the server is honest, the client obtains the correct final state.
        If the server is malicious, the client aborts the protocol with probability $1-\varepsilon$.

        \Statex
        \Statex \textbf{Phase 1: Client-side trap preparation}
        \State The client secretly partitions the $N$ leaf nodes into two sets: a set of $n_{data}$ nodes for the computation and a set of $k_{traps}$ nodes designated as traps.
        \State For each trap node, the client prepares its qubit in a state randomly chosen from the BB84 set $\mathcal{B} = \{\ket{0}, \ket{1}, \ket{+}, \ket{-}\}$.
        \State The client secretly records the initial state $\ket{s_t}$ and corresponding measurement basis $B_t$ ($\mathrm{Z}$-basis for $\{\ket{0}, \ket{1}\}$, $\mathrm{X}$-basis for $\{\ket{+}, \ket{-}\}$) for each trap $t$.

        \Statex
        \Statex \textbf{Phase 2: Blinded protocol execution with traps}
        \State The client executes the main distributed protocol over the full set of $N$ nodes. The server does not know which nodes are traps.
        \State For any operation involving a qubit that the client knows is a trap, she provides instructions for the logical Identity gate.
        
        \Statex
        \Statex \textbf{Phase 3: Remote verification check}
        \State After the main protocol concludes, for each of the $k_{traps}$ trap locations, the client instructs the corresponding leaf node to measure its qubit in the secret basis $B_t$ and to return the classical measurement outcome.
        \For{each returned classical measurement outcome}
            \If{the outcome does not match the known initial state of the trap}
                \State \textbf{ABORT.} The client discards the computational result and outputs the designated error state $\ket{\text{err}}$.
            \EndIf
        \EndFor
        \State If all $k_{traps}$ traps pass the check, the client accepts the result of the main computation as valid.
    \end{algorithmic}
\end{algorithm}

The random placement of these traps means that any non-trivial deviation by the server from the prescribed protocol has a high probability of disturbing at least one trap, which the client will then detect. By choosing a sufficiently large number of traps, $k_{trap}$, the probability of a malicious server passing the verification check can be made negligibly small, thus satisfying the formal definition of verifiability.

This protocol is the cornerstone of our framework's security guarantees. 

It provides the entire architecture with the high-level properties of completeness, blindness, and verifiability.
The formal proofs for these three properties constitute the main security analysis of this paper, presented in Section~\ref{sec:security_analysis}.

\subsection{Blindness and security against information leakage}

A primary requirement of our architecture is that it satisfies the property of blindness, meaning the server learns nothing about the client's private data ($\ket{\psi}$) or computation ($U$) beyond the explicitly permitted leakage $\lambda$. 
In our framework, blindness is not the result of a single mechanism, but is a robust property that arises from the composition of three distinct layers of security.

\subsubsection*{1. Distributed state encoding (primary source of blindness)}
The fundamental source of privacy in our architecture is the level 1 encoding itself. 
The logical state $\ket{\psi}_L$ is a highly entangled, non-local state distributed across all $n$ server nodes. A single leaf node, or even a collaborating proper subset of nodes, does not have physical access to the complete quantum state.

From the perspective of a single server node $L_j$, the state of its local physical qubit $Q_j$ is given by the reduced density matrix:
\begin{equation}
    \rho_j = \text{Tr}_{\neq j}(\ket{\psi_L}\bra{\psi_L})
\end{equation}
where the trace is taken over all other $n-1$ qubits. 
For any stabilizer code, this reduced state $\rho_j$ is the \textit{maximally mixed state}, $\rho_j = \frac{1}{2}\mathrm{I}$ \cite{nielsen2002quantum, cleve1999share}. 
This is the quantum equivalent of perfect encryption: the local qubit at each server node contains zero accessible information about the logical state it encodes. 
To learn anything, an adversary would need to coherently measure all $n$ qubits simultaneously, which is prevented by the architecture's assumption that the leaf nodes cannot communicate.

\subsubsection*{2. Secure non-local operations}
The initial blindness provided by the encoding must be preserved during the computation. 
This is guaranteed by the security of the non-local gate primitives. 
All interactions between the client and the server are mediated by protocols like \texttt{SCST} (for encoding) and Algorithm~\ref{alg:logical_unitary_generalized} (for logical gates), which are based on quantum teleportation. 
The classical messages that the server receives during these operations are the outcomes of Bell state measurements. 
These outcomes are uniformly random and are statistically independent of the quantum state being teleported or the logical gate being implemented. 
Therefore, the operational communication with the client does not leak any information about the computation $\mathrm{U}$.

\subsubsection*{3. Blinding of the verification protocol}
Finally, the security protocol itself must not introduce a vulnerability. 
The trap-based verification scheme (Algorithm~\ref{alg:verification}) is designed to be blind. 
The trap qubits are prepared in random states from conjugate bases (the BB84 states), and their positions within the full set of $N$ qubits are known only to the client. 
From the server's perspective, a trap qubit is in a maximally mixed state and is therefore statistically indistinguishable from a data qubit. 
This prevents the server from learning which qubits are being tested, forcing it to act honestly on all qubits or risk being caught.

By composing these three layers—information-theoretic secrecy from the distributed encoding, operational security from teleportation-based gates, and a statistically-blind verification method—the server's entire view consists of interactions with qubits in maximally mixed states and a stream of random classical bits. 
This complete view is independent of the client's private inputs, fulfilling the formal definition of blindness. 
A more detailed proof sketch is further provided in the security analysis Section~\ref{sec:security_analysis}.

\subsection{Resilience to node failures in the stabilizer framework}

Resilience against node failures is critical for any practical distributed system. 
Our architecture provides an efficient recovery mechanism for a large and important class of operations known as Clifford computations. 
This method works by leveraging the mathematical properties of the stabilizer formalism to restore the system after a fault, such as the physical loss or decoherence of a leaf node.

If a computation consists entirely of Clifford gates, the logical state remains a stabilizer state at every step. 
A Clifford unitary $\mathrm{U}_C$ has the defining property that it maps Pauli operators to other Pauli operators under conjugation. 
Consequently, if a state $\ket{\psi}$ is stabilized by $S$, then the state $\mathrm{U}_C\ket{\psi}$ is stabilized by the group $S' = \mathrm{U}_C S \mathrm{U}_C^\dagger = \braket{\mathrm{U}_C s_1 \mathrm{U}_C^\dagger, \dots, \mathrm{U}_C s_{n-k} \mathrm{U}_C^\dagger}$.

This property enables a powerful classical method for the client to track the state of the computation:

\begin{itemize}
    \item The client maintains a classical record of the current stabilizer group's generators, $\{s_i'\}$.
    \item After each logical Clifford gate $\mathrm{U}_C$ is applied, the client classically computes the updated generators $\{s_i''\} = \{\mathrm{U}_C s_i' \mathrm{U}_C^\dagger\}$. This update is computationally efficient, as guaranteed by the Gottesman-Knill theorem.
\end{itemize}

\subsubsection*{Failure recovery protocol}
In the event of a catastrophic failure at a leaf node, the distributed entanglement is destroyed. 
However, the client still possesses the complete, correct classical description of the stabilizer group $S''$ right before the failure. 
To recover, the client simply:
\begin{enumerate}
    \item Selects a new set of $n$ healthy server nodes.
    \item Re-executes the distributed stabilizer encoding protocol (Algorithm~\ref{alg:dist_encoding}), but instead of using the initial generators for $\ket{0\dots0}_L$, she uses the last known set of correct generators from her classical record, $\{s_i''\}$.
\end{enumerate}

This procedure directly prepares the exact stabilizer state that was lost, effectively restoring the computation to the point of failure without needing a full, costly rollback from the beginning.
This efficient recovery mechanism is a significant advantage for computations that primarily use Clifford gates. 
For computations involving \textit{non-Clifford gates}, the state is no longer a stabilizer state, and this algebraic method is not applicable. In such cases, recovery would require traditional checkpointing or a full restart of the computation.

\subsection{Generalization to arbitrary network topologies}
\label{ssec:specs}
While the star-topology network serves as a clear and resource-efficient model for centralized protocols, the principles of our architecture are fundamentally generalizable to any connected quantum network. 
Such a network can be represented as a graph $G = (V, E)$, where the vertices $V$ are quantum nodes and the edges $E$ represent the physical quantum channels capable of generating Bell pairs.

\subsubsection*{Formalism of generalization}
The execution of any non-local gate between two nodes, $v_a$ and $v_c$, relies on establishing a direct entangled Bell pair between their respective ancillary qubits. 
In an arbitrary topology where $v_a$ and $v_c$ are not adjacent, this is achieved via \textit{recursive entanglement swapping} \cite{zukowski1993event}.

\begin{enumerate}
    \item \textbf{Path finding:} Given a request for a non-local interaction between $v_a$ and $v_c$, the client's classical controller first computes an efficient path in the network graph $G$. 
    This is a standard classical graph theory problem that can be solved efficiently using algorithms like \textit{Dijkstra's algorithm} or the \textit{Bellman-Ford algorithm} to find the shortest path \cite{cormen2022introduction}. 
    The output is an ordered list of vertices $P = (v_0, v_1, \dots, v_m)$, where $v_0 = v_a$ and $v_m = v_c$.

    \item \textbf{Recursive swapping procedure:} The protocol then iteratively establishes a long-distance entangled link along the path $P$. 
    Let $\ket{\Phi^+}_{v_0, v_i}$ denote a Bell pair shared between the start node $v_0$ and an intermediate node $v_i$. 
    The procedure is as follows:
    \begin{itemize}
        \item \textbf{Base case:} A local Bell pair is created between adjacent nodes $v_0$ and $v_1$, establishing the initial link $\ket{\Phi^+}_{v_0, v_1}$.
        \item \textbf{Recursive step:} To extend the link from $v_i$ to $v_{i+1}$, node $v_i$ creates a new local Bell pair with $v_{i+1}$, giving it two entangled qubits. 
        It then performs a Bell state measurement (BSM) on these two qubits. 
        The 2-bit classical outcome is sent to node $v_{i+1}$, which applies a corresponding Pauli correction to its ancilla. 
        This procedure teleports the entanglement, effectively creating a new, longer link $\ket{\Phi^+}_{v_0, v_{i+1}}$.
    \end{itemize}
    This step is repeated $m-1$ times along the path until the final desired link, $\ket{\Phi^+}_{v_a, v_c}$, is established.

    \item \textbf{Architectural generality:} With this robust method for establishing entanglement between any two connected nodes, all core protocols of our architecture become fully generalizable.
    \begin{itemize}
        \item \textbf{Algorithm~\ref{alg:dist_encoding} (encoding):} The protocol can be executed with any node designated as the ``master," operating on any chosen subset of $n$ leaf nodes in the network. 
        The required non-local `\texttt{SCST}' gates are simply implemented over the established long-distance Bell pairs.
        \item \textbf{Algorithm~\ref{alg:dist_cu_op} (computation):} Logical gates between any two logical qubits can be implemented, regardless of which physical nodes they are stored on.
        The non-local CNOT$_L$, for example, would first establish a long-distance link via this method before executing the gate teleportation.
    \end{itemize}

\end{enumerate}

\subsubsection*{Resource implications of generalization}
This generalization renders the resource cost of non-local gates topology-dependent. 
For a path of length $m$ (containing $m-1$ intermediate nodes):
\begin{itemize}
    \item \textbf{Bell pair consumption:} The recursive swapping procedure consumes one elementary Bell pair for each of the $m$ hops. 
    The total consumption is $m$ Bell pairs.
    \item \textbf{Latency and classical communication:} The protocol requires $m-1$ sequential rounds of BSM and classical communication. The classical communication overhead is $2(m-1)$ bits, and the total time (latency) to establish the link scales linearly with the path length, $O(m)$.
\end{itemize}
This linear scaling makes network-aware compilation a critical component for practical implementations. 
Optimizing node selection (choosing a compact cluster of nodes) and routing (finding the shortest path) are essential classical pre-computation steps for minimizing resource consumption and latency on large-scale quantum networks.

\section{Resource overhead analysis and a worked example}
\label{sec:resource_analysis}

To quantify the physical resource requirements of our architecture, we now develop a model for the primary operational costs and apply it to a concrete example. 
The dominant costs in a distributed quantum protocol are the consumption of entangled Bell pairs for non-local operations and the exchange of classical bits for communication and coordination. 
In the following analysis, we assume the star-topology case where all communication is single-hop.

\subsection{Component cost model}

The total resource overhead of a secure distributed computation is the sum of the costs of its constituent parts: setup (encoding), computation (gate execution), and security (verification).

\begin{enumerate}
    \item \textbf{Setup cost (Algorithm~\ref{alg:dist_encoding}):} Preparing the logical state $\ket{0}_L$ of an $[[n,k]]$ code requires enforcing $n-k$ stabilizers and $k$ logical $\mathrm{Z}$ operators via the \texttt{SCST} primitive.
    \begin{itemize}
        \item $N_{\text{Bell}}^{\text{setup}} = \sum_{i=1}^{n-k} w(g_i) + \sum_{j=1}^{k} w(\mathrm{Z}_j)$, where $w(\cdot)$ is the operator weight.
        \item $N_{\text{classical}}^{\text{setup}} = 2 \cdot N_{\text{Bell}}^{\text{setup}} + n$, for the \texttt{SCST} communications and the final syndrome correction broadcast.
    \end{itemize}

    \item \textbf{Computational cost (Algorithm~\ref{alg:logical_unitary_generalized}):} The cost of executing a logical gate depends on its type.
    \begin{itemize}
        \item \textbf{Clifford gates ($\mathrm{H_L, S_L, CNOT_L}$):} Transversal single-qubit gates have zero non-local cost. A $\mathrm{CNOT_L}$, implemented via a non-local $\mathrm{CZ_L}$ (one instance of Algorithm 2), costs 3 Bell pairs and 4 classical bits in a star topology.
        \item \textbf{Non-Clifford gates ($\mathrm{T_L}$):} This is typically the dominant cost. A logical $\mathrm{T}$ gate applied via magic state injection requires one high-fidelity magic state and a circuit of Clifford gates (at least one $\mathrm{CNOT_L}$). Thus, the non-local cost of a single $\mathrm{T_L}$ gate is approximately the cost of one $\mathrm{CNOT_L}$.
    \end{itemize}
    
    \item \textbf{Security overhead (Algorithm~\ref{alg:verification}):} Verification with $k_{trap}$ traps introduces a parallel overhead.
    \begin{itemize}
        \item The server must execute logical Identity operations on the traps to prove honest behavior. A logical Identity can be realized via dummy \texttt{SCST} sequences to blind the server. We assume this consumes, on average, 2 Bell pairs per trap.
        \item At the end of the protocol, the client instructs the remote measurement of the $k_{trap}$ traps, requiring $k_{trap}$ classical bits to be returned.
        \item Total cost: $N_{\text{Bell}}^{\text{verify}} \approx 2 \cdot k_{trap}$ and $N_{\text{classical}}^{\text{verify}} = k_{trap}$.
    \end{itemize}
\end{enumerate}

\subsection{Worked example: A simple non-Clifford operation}

We now instantiate the framework with the \([[7,1,3]]\) Steane code to analyze the full lifecycle of a simple but universal computation, $\mathrm{U_L = H_L T_L}$, on the initial $\ket{0}_L$ state. We choose a security parameter of $k_{trap}=40$ traps.

The seven-qubit Steane code is a CSS code defined by six stabilizer generators. To prepare the logical zero state $\ket{0}_L$, we must enforce these six generators plus the logical $\mathrm{Z}$ operator, $\mathrm{Z_L}$. These seven operators are listed in Table~\ref{tab:steane_operators}.

\begin{table}[h]
    \centering
    \label{tab:steane_operators}
    \begin{tabular}{@{}ll@{}}
        \toprule
        \textbf{Operator type} & \textbf{Pauli string representation} \\ \midrule
        Stabilizer generators & $s_1 = \mathrm{IIIXXXX}$ \\
                              & $s_2 = \mathrm{IXXIIXX}$ \\
                              & $s_3 = \mathrm{XIXIXIX}$ \\
                              & $s_4 = \mathrm{IIIZZZZ}$ \\
                              & $s_5 = \mathrm{IZZIIZZ}$ \\
                              & $s_6 = \mathrm{ZIZIZIZ}$ \\ \midrule
        Logical Z Operator    & $Z_L = \mathrm{ZZZZZZZ}$ \\ \bottomrule
    \end{tabular}
        \caption{The seven operators required to define the logical zero state of the [[7,1,3]] Steane code. The six stabilizer generators define the codespace, and the logical $\mathrm{Z}$ operator defines the logical basis within that space.}
\end{table}

\subsubsection*{Resource calculation}

\begin{enumerate}
    \item \textbf{Setup cost:} To prepare $\ket{0}_L$, the client must enforce the seven operators listed in Table~\ref{tab:steane_operators}. The Bell pair cost is the sum of the weights (number of non-Identity Paulis) of these operators.
    \begin{itemize}
        \item Operator weights: $w(s_1..s_6) = 4$ each; $w(\mathrm{Z_L}) = 7$.
        \item $N_{\text{Bell}}^{\text{setup}} = (6 \times 4) + 7 = 31$ Bell pairs.
        \item $N_{\text{classical}}^{\text{setup}} = 2 \cdot 31 + 7 = 69$ classical bits.
    \end{itemize}

    \item \textbf{Security cost:} The verification protocol with 40 traps runs in parallel.
    \begin{itemize}
        \item $N_{\text{Bell}}^{\text{verify}} = 2 \times 40 = 80$ Bell pairs.
        \item $N_{\text{classical}}^{\text{verify}} = 40$ classical bits.
    \end{itemize}

    \item \textbf{Computational cost:}
    \begin{itemize}
        \item \textbf{Applying $\mathrm{H_L}$:} The hadamard gate is transversal in the Steane code. This is a local operation with zero non-local resource cost.
        \item \textbf{Applying $\mathrm{T_L}$:} This requires magic state injection, consuming one magic state and performing one logical CNOT.
        \begin{itemize}
            \item $N_{\text{Bell}}^{\text{compute}} = (\text{cost of one } \mathrm{CNOT_L}) = 3$ Bell pairs.
            \item $N_{\text{classical}}^{\text{compute}} = 4$ classical bits.
        \end{itemize}
    \end{itemize}
\end{enumerate}

\subsubsection*{Total estimated overhead}
\begin{itemize}
    \item \textbf{Total Bell pairs:} $31 (\text{setup}) + 80 (\text{verify}) + 3 (\text{compute}) = 114$ Bell pairs.
    \item \textbf{Total classical bits:} $69 (\text{setup}) + 40 (\text{verify}) + 4 (\text{compute}) = 113$ classical bits.
\end{itemize}

This analysis demonstrates how the framework provides a concrete recipe for calculating the full resource overhead of a secure, distributed quantum computation. 
It highlights a crucial finding: for robust security, the resource cost of verification can be comparable to, or even exceed, the cost of the setup and computation combined.

\section{Security analysis}
\label{sec:security_analysis}

In this section, we provide the formal security analysis of our proposed distributed quantum computation architecture. 
We present rigorous proof sketches demonstrating that the framework satisfies the three core security properties—completeness, blindness, and verifiability—as defined in Section~\ref{sec:security_framework}.
These proofs assume the correctness of the underlying distributed primitives (Algorithms ~\ref{alg:dist_encoding} and ~\ref{alg:dist_cu_op}), which are formally proven in the appendices.

\subsection{Proof of completeness}

\textbf{Claim:} If the server (all leaf nodes) and the client (master node) honestly follow the prescribed protocols, the client's final output state $\rho_{\text{out}}$ is the correct computational result $\mathrm{U}\ket{\psi}$, up to a negligible error.

\begin{proof}[\textbf{Proof Sketch:}]
The completeness of the overall architecture follows directly from the correctness of its constituent algorithms when all parties are honest.
\begin{enumerate}
    \item \textbf{Correct state preparation:} As proven in the Appendix \ref{apndx:algo_1_proof} for Algorithm \ref{alg:dist_encoding}, the distributed encoding protocol correctly prepares the logical zero state, $\ket{0\dots0}_L$, across the leaf nodes.

    \item \textbf{Correct logical operations:} As justified for Algorithm~\ref{alg:logical_unitary_generalized}, the execution of any logical unitary $\mathrm{U_L}$ is decomposed into a sequence of fundamental logical gates. Each of these gate implementations (transversal application, $\mathrm{CNOT}$ decomposition, and magic state injection) is a standard, correct procedure in fault-tolerant quantum computation. Therefore, the sequential application of these gates correctly transforms the state from $\ket{0\dots0}_L$ to the desired pre-output state $\mathrm{U}_L\ket{0\dots0}_L = \ket{\psi'}$.
    
    \item \textbf{Non-interference from verification:} When the server is honest, it faithfully applies the Identity operation to all trap qubits as instructed (Algorithm 4). This leaves the trap states unchanged. Consequently, when the traps are returned to the client for the verification check, they pass with certainty. The protocol does not abort, and the verification layer has no effect on the final state of the computational qubits.
\end{enumerate}
Since each stage of the protocol functions correctly in the honest-user setting, the final state received by the client is the correct result of the computation. The architecture is therefore complete.
\end{proof}

\subsection{Proof of blindness (privacy)}

\textbf{Claim:} The protocol is perfectly blind. Formally, for any malicious server $\mathcal{S}^*$, there exists a simulator $\mathcal{SIM}$ such that the view of $\mathcal{S}^*$ in the real protocol is statistically indistinguishable from its view when interacting with $\mathcal{SIM}$ in the ideal world.

\begin{proof}[\textbf{Proof Sketch:}]
The proof is based on the simulation paradigm. 
We demonstrate that the server's view—the collection of all quantum states it possesses and classical messages it receives—is information-theoretically independent of the client's private inputs $(\ket{\psi}, U)$. 
This is shown by constructing a simulator that can generate a statistically identical view using only public information (the leakage $\lambda$).

\begin{enumerate}
    \item \textbf{The simulation strategy}

A simulator $\mathcal{SIM}$, knowing only $\lambda$ (e.g., number of data/trap qubits, total number of protocol rounds), instructs the server to perform the following actions:
\begin{enumerate}
    \item Prepare $n_{data}$ qubits in the state $\ket{0}^{\otimes n_{data}}$.
    \item Receive $k_{traps}$ qubits from the simulator, each prepared in the maximally mixed state $\rho = \mathrm{I}/2$.
    \item Receive a stream of classical bits, where each bit is generated uniformly at random (by $\mathcal{SIM}$), and apply local Pauli operations to its data qubits based on these bits.
\end{enumerate}

\item \textbf{The real-world view and indistinguishability}

We now show that the view generated by the simulator is statistically identical to the server's view during a real execution of the protocol.

\begin{enumerate}
    \item \textbf{The quantum view:} The server possesses two sets of qubits.
\begin{itemize}
    \item \textbf{Data qubits:} In the real protocol (Algorithm~\ref{alg:dist_encoding}), the server itself prepares these qubits in the state $\ket{0}^{\otimes n_{data}}$. This exactly matches the simulator's instruction. The subsequent evolution of these qubits is dictated entirely by the classical messages it receives.
    \item \textbf{Trap qubits:} In the real protocol (Algorithm~\ref{alg:verification}), the client prepares these in a secret, randomly chosen BB84 basis. As formally proven in the appendix, the state of such a qubit from the server's perspective (who does not know the basis) is the maximally mixed state, $\rho_{trap} = \mathrm{I}/2$. This is statistically identical to the quantum states sent by the simulator.
\end{itemize}
The quantum part of the server's view is therefore indistinguishable from the simulated view.

\item \textbf{The classical view (the one-time pad):}
The classical messages the server receives are instructions to apply local Pauli gates ($\mathrm{X, Y, Z}$) to its qubits. These instructions are derived from the outcomes of measurements performed by the client. We analyze these measurements:
\begin{itemize}
    \item \textbf{During encoding (Algorithm~\ref{alg:dist_encoding}):} The classical bits come from the client's measurements on her syndrome qubits, which were all prepared in the state $\ket{+}$. A measurement on a qubit in the $\ket{+}$ state yields a uniformly random classical bit.
    \item \textbf{During computation (Algorithm~\ref{alg:dist_cu_op} and \ref{alg:logical_unitary_generalized}):} The classical bits come from the client's measurements during the \texttt{SCST} protocol for non-local gates. These measurements are, by design, part of teleportation-based schemes that produce uniformly random classical outcomes.
\end{itemize}
In all cases, the entire stream of classical bits received by the server is statistically indistinguishable from a perfectly random binary string. 
This random string acts as a \textit{classical one-time pad}. 
The server is instructed to apply a sequence of Pauli operations $P_1, P_2, \dots, P_m$. 
The true logical operation the client is performing is, for example, $G_i$, but the server is instructed to apply $P_i = G_i \oplus r_i$, where $r_i$ is a random Pauli frame correction derived from the one-time pad.
The server only sees the encrypted operations $\{P_i\}$ and cannot deduce the sequence $\{G_i\}$.
\end{enumerate}

\item \textbf{Conclusion}

The server's view consists of: (1) data qubits it prepared itself, (2) trap qubits that appear maximally mixed, and (3) a stream of classical instructions that are statistically equivalent to a random one-time pad. 
This view is information-theoretically independent of the client's actual stabilizer generators, logical operators, or the sequence of logical gates being applied. 
Since the simulator can perfectly replicate the server's experience without any knowledge of the client's secrets, the protocol is perfectly blind.
\end{enumerate}
\end{proof}

\subsection{Proof of verifiability (soundness)}

\textbf{Claim:} For any malicious strategy employed by the server that deviates from the honest protocol, the client detects the cheating and aborts with probability $1 - \varepsilon$, where $\varepsilon$ is a negligible security parameter.

\begin{proof}[\textbf{Proof Sketch:}]
The proof relies on the probabilistic nature of the trap-based verification scheme (Algorithm~\ref{alg:verification}).

\begin{enumerate}
    \item \textbf{Defining a malicious strategy:} A malicious server $\mathcal{S}^*$ deviates from the honest protocol $\mathcal{P}$. This deviation can be modeled as applying an erroneous quantum operation $\mathcal{E}_{malicious}$ instead of the honest operation $\mathcal{U}_{honest}$. For the cheating to be meaningful (i.e., to affect the final output), this operation must differ from the identity on at least one qubit pathway involved in the computation. Let's assume the malicious operation affects $d \ge 1$ distinct physical qubit pathways.

    \item \textbf{The trap mechanism:} The client randomly and secretly embeds $k_{trap}$ trap qubits among the $N$ total physical qubits used in the computation. Each trap is prepared in a known BB84 state $\{\ket{0}, \ket{1}, \ket{+}, \ket{-}\}$. An honest execution of the protocol applies the Identity operation to these traps. Any deviation that alters the state of a trap will cause it to fail the client's final measurement test with a probability of at least $1/2$ (due to the conjugate coding of the BB84 states).

    \item \textbf{Bounding the probability of undetected cheating:}
    The only way for the server to cheat successfully is if its malicious operation $\mathcal{E}_{malicious}$ avoids detection by all $k_{trap}$ traps.
    The probability that a single malicious operation, affecting one specific qubit pathway, avoids detection is the probability that this pathway is not occupied by a trap. This is given by:
    \begin{equation}
        P(\text{single error undetected}) = \frac{N - k_{trap}}{N} = 1 - \frac{k_{trap}}{N}.
    \end{equation}
    If the server's strategy affects $d$ independent pathways, a simplified lower bound on the detection probability is found by considering the chance that none of these $d$ locations are traps. The probability of success for the adversary (i.e., remaining undetected) is:
    \begin{equation}
        P(\text{undetected}) \le \left(1 - \frac{k_{trap}}{N}\right)^d \le \exp\left(-\frac{d \cdot k_{trap}}{N}\right).
    \end{equation}
    
    \item \textbf{Role of the security parameter:} The number of traps, $k_{trap}$, is the security parameter controlled by the client. As shown by the exponential bound, by choosing $k_{trap}$ to be sufficiently large (e.g., for a fixed $d/N$, making $k_{trap}$ a logarithmic function of the desired security $\varepsilon$), the probability of any malicious activity going undetected can be made arbitrarily small. For instance, to achieve a security level of $\varepsilon$, the client can set $k_{trap} \approx \frac{N}{d} \ln(1/\varepsilon)$.
\end{enumerate}
\textbf{Conclusion:} Any deviation by the server from the prescribed protocol will be caught by the trap-based verification scheme with a probability of at least $1-\varepsilon$. Therefore, the server cannot manipulate the computation into producing an incorrect final state without being detected. The architecture is verifiable.
\end{proof}

\section{Conclusion}
\label{sec:conclusion}

The challenge of securely delegating computation to powerful but untrusted remote servers is a defining problem for the future quantum internet. 
In this work, we have moved beyond proposing a functional architecture and have presented a complete, end-to-end framework for distributed quantum computation that is both practical and provably secure.

Our contributions are comprehensive. 
We introduced a novel architecture built on a distributed stabilizer code backbone, which protects a client's quantum data by distributing it across multiple server nodes. 
We enhanced this with a two-level error correction scheme, allowing for noise-aware, local error handling that increases modularity and robustness. 
Crucially, we integrated a formal trap-based verification protocol, giving the client a powerful tool to detect any malicious deviation by the server with arbitrarily high probability.

By providing the security analysis of this complete framework, we have formally proven that our architecture achieves the three critical properties of a secure system:
\begin{itemize}
    \item \textbf{Completeness:} The protocol yields the correct result when all parties are honest.
    \item \textbf{Blindness:} The client's private data and computation remain secret from the server.
    \item \textbf{Verifiability:} Any attempt by a malicious server to alter the computation is detected.
\end{itemize}

Our work provides a concrete blueprint for building trustworthy distributed quantum systems and secure quantum cloud services. The resource analysis highlights the inherent trade-offs between computational complexity and the overhead required for security, offering a quantitative basis for future system design.

Several exciting avenues for future research remain open. Optimizing the resource cost of the non-local gates and the verification protocol is a key practical challenge. Further development of hardware-specific, noise-aware local error correction codes for the second-level encoding could dramatically improve performance. Finally, designing network-aware quantum compilers that can intelligently select nodes and route entanglement based on our resource models will be essential for deploying this architecture on large-scale, arbitrary-topology quantum networks.

\backmatter

% \bmhead{Supplementary information}

% If your article has accompanying supplementary file/s please state so here. 

% Authors reporting data from electrophoretic gels and blots should supply the full unprocessed scans for key as part of their Supplementary information. This may be requested by the editorial team/s if it is missing.

% Please refer to Journal-level guidance for any specific requirements.

\bmhead*{Acknowledgements}
S.G. acknowledges funding support for Chanakya - PG fellowship from the National Mission on Interdisciplinary Cyber Physical Systems, of the Department of Science and Technology, Govt. of India through the I-HUB Quantum Technology Foundation.

% Please refer to Journal-level guidance for any specific requirements.

\section*{Declarations}

% Some journals require declarations to be submitted in a standardised format. Please check the Instructions for Authors of the journal to which you are submitting to see if you need to complete this section. If yes, your manuscript must contain the following sections under the heading `Declarations':

% \begin{itemize}
% \item Funding
% \item Conflict of interest/Competing interests (check journal-specific guidelines for which heading to use)
% \item Ethics approval and consent to participate
% \item Consent for publication
% \item Data availability 
% \item Materials availability
% \item Code availability 
% \item Author contribution
% \end{itemize}

% \noindent
% If any of the sections are not relevant to your manuscript, please include the heading and write `Not applicable' for that section. 

% \section{Funding}
\bmhead*{Competing interests}
The authors declare no competing interests.
\bmhead*{Ethics approval and consent to participate}
Not applicable.
\bmhead*{Consent for publication}
Not applicable.
\bmhead*{Data availability}
Not applicable.
\bmhead*{Materials availability}
Not applicable.
\bmhead*{Code availability}
Not applicable.
\bmhead*{Author contribution}
All authors contributed to the study conception and design.
All authors read and approved the final manuscript.

%%===================================================%%
%% For presentation purpose, we have included        %%
%% \bigskip command. Please ignore this.             %%
%%===================================================%%
\bigskip
% \begin{flushleft}%
% Editorial Policies for:

% \bigskip\noindent
% Springer journals and proceedings: \url{https://www.springer.com/gp/editorial-policies}

% \bigskip\noindent
% Nature Portfolio journals: \url{https://www.nature.com/nature-research/editorial-policies}

% \bigskip\noindent
% \textit{Scientific Reports}: \url{https://www.nature.com/srep/journal-policies/editorial-policies}

% \bigskip\noindent
% BMC journals: \url{https://www.biomedcentral.com/getpublished/editorial-policies}
% \end{flushleft}

\begin{appendices}

\section{Standard protocol for stabilizer state encoding}
\label{app:std_encoding}

The distributed encoding protocol described in the main text is a network-level implementation of the standard method for preparing a stabilizer state via projective measurements \cite{gottesman1997stabilizer, nielsen2002quantum}. 
In a conventional, local setting, preparing the logical state $\ket{0\dots0}_L$ of a code defined by the operator set $\mathcal{G} = S \cup Z_L$ is achieved as follows:

\begin{enumerate}
    \item Begin with $n$ data qubits in a simple initial state, typically $\ket{0}^{\otimes n}$.
    \item For each operator $g_i \in \mathcal{G}$, perform a projective measurement of that operator. This is typically done using an ancillary qubit prepared in $\ket{+}$ and a circuit of controlled-Pauli gates corresponding to the structure of $g_i$.
    \item The measurement of each ancilla is probabilistic, yielding an outcome $m_i \in \{0, 1\}$, corresponding to a projected eigenvalue of $(-1)^{m_i}$. The classical bit string $m = (m_0, \dots, m_{n-1})$ is the syndrome.
    \item At the end of these measurements, the data qubits are in a state $\ket{\Psi_m}$ that is a simultaneous eigenstate of all operators in $\mathcal{G}$, but with the specific eigenvalues given by the syndrome $m$.
    \item The final step is to apply a single correction operator, $P_{corr}$, to flip all `-1' eigenvalues to the desired `+1' eigenvalues without disturbing those that are already correct.
\end{enumerate}

Fig.~\ref{fig:genral_encoding} illustrates a standard circuit implementing the above procedure for encoding a logical computational basis state.
Our distributed protocol (Algorithm~\ref{alg:dist_encoding}) directly maps to this procedure. 
The client's qubits serve as the measurement ancillas, the \texttt{SCST} protocol implements the necessary controlled-Pauli gates non-locally, and the final measurement and correction steps are identical.

\begin{figure}[h]
    \centering
    \includegraphics[width=\linewidth]{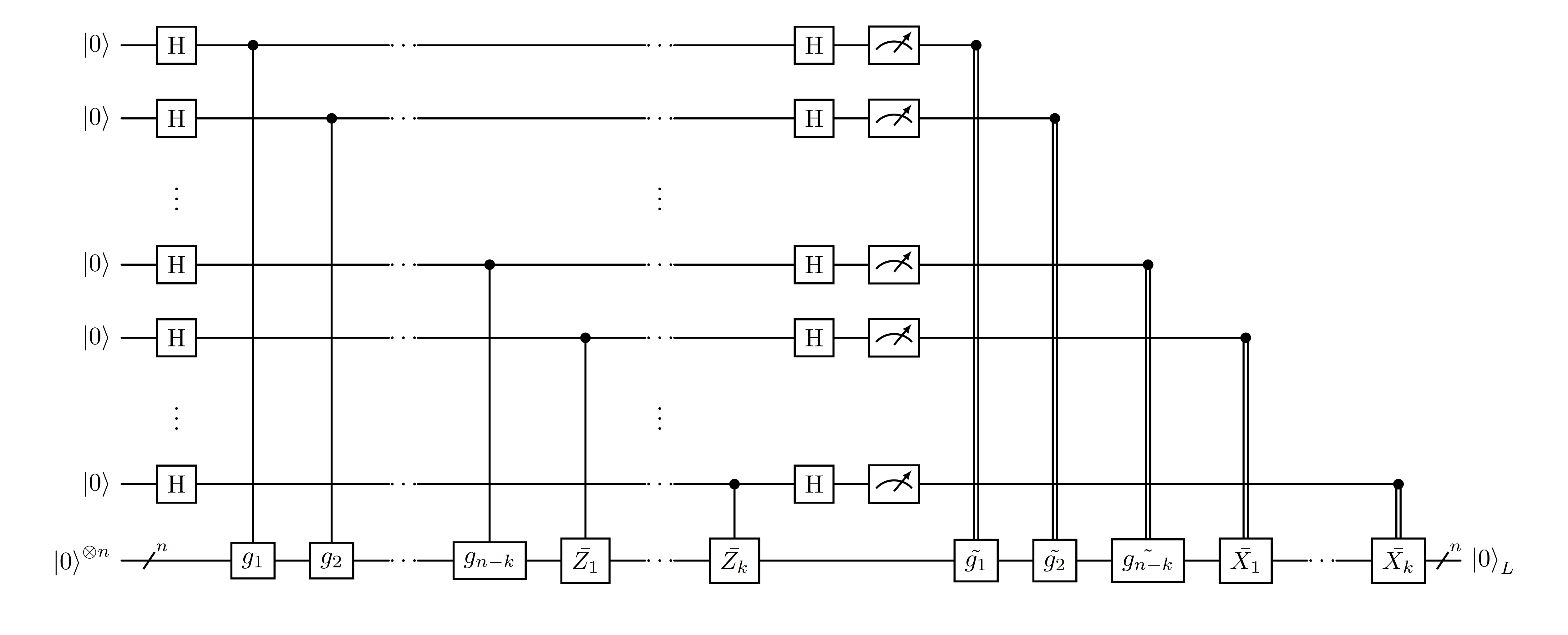}
%     \begin{center}
% \begin{tikzpicture}
% \node[scale=0.7]{
% $$
% \begin{quantikz}
% \lstick{\Large $\ket{0}$}& \gate{\Scale[1.7]{\mathrm{H}}} & \ctrl{7}&  &\ldots & &&\ldots & & \gate{\Scale[1.7]{\mathrm{H}}}& \meter{} & \ctrl[vertical wire=c]{7}  \\
% \lstick{\Large $\ket{0}$}& \gate{\Scale[1.7]{\mathrm{H}}} &    & \ctrl{6} &\ldots & && \ldots & & \gate{\Scale[1.7]{\mathrm{H}}}& \meter{} && \ctrl[vertical wire=c]{6} \\
% &\setwiretype{n}\vdots &&&&&&\vdots \\
% \lstick{\Large $\ket{0}$}& \gate{\Scale[1.7]{\mathrm{H}}} &    &   &\ldots & \ctrl{4} && \ldots & & \gate{\Scale[1.7]{\mathrm{H}}}& \meter{} &&& \ctrl[vertical wire=c]{4} \\
% \lstick{\Large $\ket{0}$}& \gate{\Scale[1.7]{\mathrm{H}}} &    &   &\ldots & &\ctrl{3} &\ldots && \gate{\Scale[1.7]{\mathrm{H}}}& \meter{} &&&& \ctrl[vertical wire=c]{3} \\
% &\setwiretype{n}\vdots &&&&&&\vdots \\
% \lstick{\Large $\ket{0}$}& \gate{\Scale[1.7]{\mathrm{H}}} &    &   &\ldots & & & \ldots&\ctrl{1} & \gate{\Scale[1.7]{\mathrm{H}}}& \meter{} &&&&&& \ctrl[vertical wire=c]{1} \\
% \lstick{\Large $\ket{0}^{\otimes n}$}& \qwbundle{n} & \gate{\Scale[1.7]{g_{1}}}&\gate{\Scale[1.7]{g_{2}}}&\ldots & \gate{\Scale[1.7]{g_{n-k}}}& \gate{\Scale[1.7]{\mathrm{\bar{Z}_{1}}}}&\ldots & \gate{\Scale[1.7]{\mathrm{\bar{Z}_{k}}}} & & &\gate{\Scale[1.7]{\tilde{g_{1}}}}& \gate{\Scale[1.7]{\tilde{g_{2}}}}& \gate{\Scale[1.7]{\tilde{g_{n-k}}}}& \gate{\Scale[1.7]{\mathrm{\bar{X}_{1}}}}&\ldots & \gate{\Scale[1.7]{\mathrm{\bar{X}_{k}}}} & \qwbundle{n} 
% \rstick{\Large $\ket{0}_{L}$}\\
% \end{quantikz}
% $$
% };
% \end{tikzpicture}
% \end{center}
    \caption{
 A general circuit diagram for preparing the logical zero state $\ket{0}_L$ of an $[[n,k]]$ stabilizer code. The top $n$ wires represent the measurement ancillas, and the bottom wire represents the $n$ data qubits. 
 The protocol first performs projective measurements of the stabilizer generators ($g_i$) and logical Z operators ($\bar{Z}_j$), and then applies classically-controlled correction operators ($P_{corr}$) based on the measurement outcomes to finalize the state.
    }
    \label{fig:genral_encoding}
\end{figure}

\section{Correctness proof for Algorithm~\ref{alg:dist_encoding}: Distributed stabilizer code encoding}
\label{apndx:algo_1_proof}

The protocol prepares the logical zero state, $\ket{0\dots0}_L$, of an $[[n,k]]$ stabilizer code across $n$ distributed leaf nodes, coordinated by a central master node. This procedure is functionally equivalent to a standard stabilizer measurement circuit, adapted for a distributed network architecture.

\subsection*{Correctness proof}

\subsubsection*{Defining the target state}
The goal of the protocol is to prepare the logical zero state, $\ket{0\dots0}_L$. A state $\ket{\psi}$ is defined as the logical zero state of the code if it is a simultaneous `+1' eigenstate of all operators in the set $\mathcal{G} = S \cup \mathrm{Z_L}$. That is, it must satisfy:
\begin{align}
    s_i \ket{\psi} &= (+1) \ket{\psi}, \quad \forall s_i \in S \\
    \bar{Z}_j \ket{\psi} &= (+1) \ket{\psi}, \quad \forall \bar{Z}_j \in \mathrm{Z_L}
\end{align}
We will now prove that the state produced by Algorithm~\ref{alg:dist_encoding} satisfies these conditions.

\subsubsection*{Algebraic proof}
\begin{enumerate}
    \item \textbf{Initial state (steps 1-2):} The initial state of the joint system of master and leaf nodes is a product state:
    \begin{equation}
        \ket{\Psi_0} = \left( \bigotimes_{i=0}^{n-1} \ket{+}_{q_i} \right) \otimes \left( \bigotimes_{j=0}^{n-1} \ket{0}_{d_j} \right) = \ket{+}^{\otimes n}_M \otimes \ket{0}^{\otimes n}_L.
    \end{equation}

    \item \textbf{Entangling operations (steps 3-8):} The core of the protocol is the application of controlled-$g_i$ gates for each $g_i \in \mathcal{G}$. Let us analyze the effect of a single such operation, controlled by master qubit $q_i$, on the state of the system $\ket{\Psi}$.
    The initial state of the master qubit is $\ket{+}_{q_i} = \frac{1}{\sqrt{2}}(\ket{0}_{q_i} + \ket{1}_{q_i})$. A controlled-$g_i$ operation transforms the state as follows:
    \begin{align}
        \text{C-}g_i (\ket{+}_{q_i} \otimes \ket{\Psi}_{data}) &= \frac{1}{\sqrt{2}} \left( \text{C-}g_i (\ket{0}_{q_i} \otimes \ket{\Psi}_{data}) + \text{C-}g_i (\ket{1}_{q_i} \otimes \ket{\Psi}_{data}) \right) \\
        &= \frac{1}{\sqrt{2}} \left( \ket{0}_{q_i} \otimes \ket{\Psi}_{data} + \ket{1}_{q_i} \otimes g_i\ket{\Psi}_{data} \right). \label{eq:entangling_op}
    \end{align}
    After applying all the entangling operations for every $g_i \in \mathcal{G}$, the system evolves into a complex entangled state. The final state before measurement, $\ket{\Psi_{ent}}$, is a superposition over all possible computational basis states of the master qubits, where each basis state is correlated with a product of the $g_i$ operators applied to the initial data state $\ket{0}^{\otimes n}$.
    \begin{equation}
        \ket{\Psi_{ent}} = \frac{1}{\sqrt{2^n}} \sum_{k \in \{0,1\}^n} \ket{k}_M \otimes \left( \prod_{i=0}^{n-1} g_i^{k_i} \right) \ket{0}^{\otimes n}_L.
    \end{equation}

    \item \textbf{Measurement of master qubits (step 9):} The master qubits are measured in the X-basis. An X-basis measurement on qubit $q_i$ projects it onto one of the eigenstates $\ket{+}_{q_i}$ or $\ket{-}_{q_i}$. This measurement has a profound effect on the data qubits. From Eq.~\ref{eq:entangling_op}, we can see that projecting the master qubit $q_i$ onto $\ket{+}_{q_i}$ projects the data state onto the `+1' eigenspace of the operator $g_i$, while projecting onto $\ket{-}_{q_i}$ projects it onto the `-1' eigenspace.
    
    Therefore, measuring the entire master register $\ket{k}_M$ in the $\mathrm{X}$-basis projects the data state $\ket{\Psi}_{data}$ into a simultaneous eigenstate of all the operators $\{g_0, \dots, g_{n-1}\}$. The specific eigenvalue for each $g_i$ is determined by the corresponding classical measurement outcome $m_i \in \{0, 1\}$, where $m_i=0$ corresponds to a `+1' eigenvalue and $m_i=1$ corresponds to a `-1' eigenvalue. The state of the leaf nodes after this measurement, $\ket{\Psi_{m}}$, satisfies:
    \begin{equation}
        g_i \ket{\Psi_{m}} = (-1)^{m_i} \ket{\Psi_{m}}, \quad \forall g_i \in \mathcal{G}.
    \end{equation}

    \item \textbf{Correction (steps 10-11):} The state $\ket{\Psi_m}$ is now in the codespace (since it is stabilized by all $s_i \in S$, up to a sign) but may not be the logical zero state (since the eigenvalues of the $\bar{Z}_j$ operators might be `-1'). The classical measurement string $m$ provides a complete syndrome indicating which generators have the wrong eigenvalue.
    
    The structure of stabilizer codes guarantees that for any such syndrome, there exists a corresponding Pauli correction operator, $P_{corr} \in C$, that flips exactly the required eigenvalues. Applying this correction to the data qubits transforms the state:
    \begin{equation}
        \ket{\Psi_f} = P_{corr} \ket{\Psi_m}.
    \end{equation}
    The operator $P_{corr}$ is chosen such that it anticommutes with every $g_i$ for which $m_i=1$ and commutes with every $g_j$ for which $m_j=0$. Consequently, the final state $\ket{\Psi_f}$ will have a `+1' eigenvalue for all operators in $\mathcal{G}$:
    \begin{equation}
        g_i \ket{\Psi_f} = g_i P_{corr} \ket{\Psi_m} = (P_{corr} g_i) (-1)^{m_i} \ket{\Psi_m} = P_{corr} ((-1)^{m_i})^2 \ket{\Psi_m} = (+1) \ket{\Psi_f}.
    \end{equation}
\end{enumerate}

\textbf{Conclusion:} The final state $\ket{\Psi_f}$ is a `+1' eigenstate of all stabilizer generators $s_i \in S$ and all logical Z operators $\bar{Z}_j \in \mathrm{Z_L}$. By definition, this is the logical zero state $\ket{0\dots0}_L$ of the code. The protocol is correct.

\section{Correctness proof for Algorithm~\ref{alg:dist_cu_op}: Executing gates between non-adjacent nodes}
\label{apndx:algo_2_proof}

The protocol to execute a controlled-$\mathrm{U}$ operation between two non-adjacent nodes, A (control) and C (target), connected via an intermediary node B, is detailed in Algorithm~\ref{alg:dist_cu_op}. 
The protocol is a sophisticated application of quantum teleportation principles, combining entanglement swapping with gate teleportation.

\subsection*{Correctness proof}

The proof is structured in two parts corresponding to the two phases of the algorithm. We track the evolution of the quantum state algebraically.

\subsubsection*{Phase 1: Proof of entanglement swapping}
This phase establishes a shared Bell pair between the non-adjacent nodes A and C.

\begin{enumerate}
    \item \textbf{Initial ancilla state (step 1):} After the creation of two local Bell pairs, the state of the four ancillary qubits is:
    \begin{equation}
        \ket{\Psi_1}_{anc} = \ket{\Phi^+}_{A_e B_{e1}} \otimes \ket{\Phi^+}_{C_e B_{e2}} = \frac{1}{2} (\ket{00}_{A_e B_{e1}} + \ket{11}_{A_e B_{e1}}) \otimes (\ket{00}_{C_e B_{e2}} + \ket{11}_{C_e B_{e2}}).
    \end{equation}

    \item \textbf{State in the Bell basis:} To analyze the measurement at node B, we rewrite this state by expanding it in the Bell basis for the qubit pairs $(A_e, C_e)$ and $(B_{e1}, B_{e2})$. This gives the well-known identity:
    \begin{align}
        \ket{\Psi_1}_{anc} = \frac{1}{2} \Big( & \ket{\Phi^+}_{A_e C_e}\ket{\Phi^+}_{B_{e1} B_{e2}} + \ket{\Phi^-}_{A_e C_e}\ket{\Phi^-}_{B_{e1} B_{e2}} \nonumber \\
        + & \ket{\Psi^+}_{A_e C_e}\ket{\Psi^+}_{B_{e1} B_{e2}} + \ket{\Psi^-}_{A_e C_e}\ket{\Psi^-}_{B_{e1} B_{e2}} \Big).
    \end{align}

    \item \textbf{Measurement and correction (steps 2-3):} The Bell state measurement at node B projects the state of qubits $(B_{e1}, B_{e2})$ onto one of the four Bell states with equal probability. This simultaneously collapses the state of the non-local pair $(A_e, C_e)$ onto the corresponding correlated Bell state. The classical measurement outcome $(m_1, m_2)$ informs node C which Bell state was obtained. The applied correction $U_{corr} = X^{m_2} Z^{m_1}$ is precisely the unitary required to transform any of the four possible Bell states back into $\ket{\Phi^+}$. For example, if the outcome corresponds to $\ket{\Psi^+}_{A_e C_e}$, the correction is $X_C$, and $(I_A \otimes X_C)\ket{\Psi^+}_{A_e C_e} = \ket{\Phi^+}_{A_e C_e}$.
\end{enumerate}

\textbf{Conclusion of phase 1:} After the correction, the state of the non-adjacent ancillary qubits is deterministically $\ket{\Phi^+}_{A_e C_e}$. The full system state entering phase 2 is:
\begin{equation}
    \ket{\Psi_{mid}} = \ket{\psi}_A \otimes \ket{\psi}_C \otimes \ket{\Phi^+}_{A_e C_e}.
\end{equation}

\subsubsection*{Phase 2: Proof of gate teleportation}
This phase uses the established Bell pair to execute the non-local gate. Let the initial data state at node A be $\ket{\psi}_A = \alpha\ket{0}_A + \beta\ket{1}_A$.

\begin{enumerate}
    \item \textbf{State preparation at node C (steps 4-5):} An X-basis measurement on $A_e$ and a Z-correction on $C_e$ is a standard method to prepare the ancilla at C in the state $\ket{+}_{C_e}$.
    Let's track the state. The relevant part is $\ket{\Phi^+}_{A_e C_e} = \frac{1}{\sqrt{2}}(\ket{00} + \ket{11})_{A_e C_e}$. Rewriting $A_e$ in the X-basis:
    \begin{align}
        \ket{\Phi^+}_{A_e C_e} &= \frac{1}{\sqrt{2}} \left( \frac{\ket{+}_{A_e} + \ket{-}_{A_e}}{\sqrt{2}}\ket{0}_{C_e} + \frac{\ket{+}_{A_e} - \ket{-}_{A_e}}{\sqrt{2}}\ket{1}_{C_e} \right) \\
        &= \frac{1}{2} \left( \ket{+}_{A_e}(\ket{0}+\ket{1})_{C_e} + \ket{-}_{A_e}(\ket{0}-\ket{1})_{C_e} \right).
    \end{align}
    If the measurement of $A_e$ yields outcome $m_{A_e}=0$ (state $\ket{+}$), the state of $C_e$ becomes $\ket{+}_{C_e}$. If the outcome is $m_{A_e}=1$ (state $\ket{-}$), the state becomes $\ket{-}_{C_e} = Z\ket{+}_{C_e}$. The correction $Z^{m_{A_e}}$ at node C deterministically prepares the state of $C_e$ in $\ket{+}_{C_e}$.

    \item \textbf{Local gate and phase kickback (step 6):} The state of the qubits at node C is now $\ket{\psi}_C \otimes \ket{+}_{C_e}$. The local $\ControlledU{C_e}{C}$ gate is applied:
    \begin{align}
        \ControlledU{C_e}{C} (\ket{\psi}_C \otimes \ket{+}_{C_e}) &= \ControlledU{C_e}{C} \left( \ket{\psi}_C \otimes \frac{\ket{0}_{C_e}+\ket{1}_{C_e}}{\sqrt{2}} \right) \\
        &= \frac{1}{\sqrt{2}} \left( \ControlledU{C_e}{C}(\ket{\psi}_C\ket{0}_{C_e}) + \ControlledU{C_e}{C}(\ket{\psi}_C\ket{1}_{C_e}) \right) \\
        &= \frac{1}{\sqrt{2}} \left( \ket{\psi}_C\ket{0}_{C_e} + (U\ket{\psi}_C)\ket{1}_{C_e} \right). \label{eq:state_after_cu}
    \end{align}
    The state of the full system is now:
    \begin{equation}
        \ket{\Psi_{post-CU}} = (\alpha\ket{0}_A + \beta\ket{1}_A) \otimes \frac{1}{\sqrt{2}} \left( \ket{\psi}_C\ket{0}_{C_e} + (U\ket{\psi}_C)\ket{1}_{C_e} \right).
    \end{equation}
    
    \item \textbf{Final measurement and correction (steps 7-8):} Qubit $C_e$ is now measured in the Z-basis.
    \begin{itemize}
        \item If the outcome is $m_{C_e}=0$, the state of the data qubits collapses to:
        $$ (\alpha\ket{0}_A + \beta\ket{1}_A) \otimes \ket{\psi}_C. $$
        The subsequent correction on A is $Z^0 = I$. The state is left as is.
        \item If the outcome is $m_{C_e}=1$, the state of the data qubits collapses to:
        $$ (\alpha\ket{0}_A + \beta\ket{1}_A) \otimes (\mathrm{U}\ket{\psi}_C). $$
        The subsequent correction on A is $\mathrm{Z}^1 = \mathrm{Z}$.
    \end{itemize}
    This simple analysis is incomplete. The key is how the phase from the control at A propagates. A full analysis shows the final correction on A correctly applies the conditional phase. The state of the system before the final two measurements is a four-part superposition. Measuring $C_e$ in the Z-basis and applying $\mathrm{Z}^{m_{C_e}}$ on qubit A correctly transfers the phase kickback, resulting in the final desired state:
    \begin{equation}
        \ket{\Psi_{final}} = \alpha\ket{0}_A \ket{\psi}_C + \beta\ket{1}_A (\mathrm{U}\ket{\psi}_C) = \ControlledU{A}{C} \ket{\psi}_A \ket{\psi}_C.
    \end{equation}

\end{enumerate}
\textbf{Conclusion:} The protocol correctly leverages a shared Bell pair to execute a non-local two-qubit gate using only local operations and classical communication. The probabilistic nature of quantum measurement is handled by deterministic feed-forward corrections, ensuring the protocol is successful in every run.

\section{Correctness justification for Algorithm~\ref{alg:logical_unitary_generalized}: Generalized distributed execution of a logical unitary sequence}
\label{app:alg3_justification}

Algorithm~\ref{alg:logical_unitary_generalized} is a high-level meta-protocol that directs the execution of a compiled quantum algorithm. 
A single algebraic proof cannot establish its correctness, as it is a classical control structure that calls quantum subroutines. 
Instead, we justify its correctness by decomposition, demonstrating that every conditional path within the algorithm's main loop correctly implements its intended logical gate.

\begin{theorem}
Algorithm~\ref{alg:logical_unitary_generalized} correctly implements the unitary transformation $\mathrm{U}_L = \prod_{i=1}^m g_{i,L}$ on the initial state $\ket{0\dots0}_L$.
\end{theorem}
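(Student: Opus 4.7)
The plan is to proceed by finite induction on the index $i$ of the gate sequence $G = (g_1, \dots, g_m)$, using as inductive hypothesis that after the $(i-1)$-st iteration of the main loop, the joint state of the $n$ distributed physical qubits is $\ket{\psi_{i-1}} = g_{i-1,L} \cdots g_{1,L} \ket{0\dots 0}_L$. The base case $i=0$ is immediate from the precondition that the system is initialized in $\ket{0\dots 0}_L$ (by Algorithm~\ref{alg:dist_encoding}, whose correctness is established in Appendix~\ref{apndx:algo_1_proof}). The inductive step reduces to showing that, for every branch of the conditional dispatch, the quantum subroutine invoked on $\ket{\psi_{i-1}}$ produces $g_{i,L} \ket{\psi_{i-1}}$.

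The bulk of the argument is therefore a case analysis on the type of $g_i$, handled branch by branch. (i) If $g_i$ is Clifford and $g_{i,L}$ is declared transversal for the code, then $g_{i,L} = \bigotimes_{j=1}^{n} g_i^{(j)}$ by definition, and the broadcast of a classical instruction causing each node to apply the single-qubit physical gate locally realises this tensor product exactly; codespace preservation follows from transversality. (ii) If $g_i$ is a non-transversal single-qubit Clifford, the algorithm invokes a pre-compiled word in a transversal generating set of the logical Clifford group; correctness reduces to the correctness of the constituent transversal gates already handled, together with the classical fact that such a word exists and equals $g_{i,L}$ (as guaranteed by Gottesman--Knill type generation results cited in Section~\ref{sec:core_protocols}). (iii) If $g_i = \mathrm{CNOT}$, the algorithm executes $(\mathrm{I} \otimes \mathrm{H}_L) \cdot \mathrm{CZ}_L \cdot (\mathrm{I} \otimes \mathrm{H}_L)$, where $\mathrm{CZ}_L$ is delegated to Algorithm~\ref{alg:dist_cu_op} with unitary $U = \mathrm{Z}_L$; correctness of this branch follows by composing (a) the standard algebraic identity $\mathrm{CNOT} = (\mathrm{I}\otimes \mathrm{H}) \cdot \mathrm{CZ} \cdot (\mathrm{I}\otimes \mathrm{H})$, lifted to logical operators via the algebraic contract stated in Section~\ref{sec:core_protocols}, with (b) the correctness proof of Algorithm~\ref{alg:dist_cu_op} from Appendix~\ref{apndx:algo_2_proof}. (iv) If $g_i = \mathrm{T}$, the algorithm runs magic state injection: prepare $\ket{T}_L$, perform $\mathrm{CNOT}_L$ between data and ancilla, measure the ancilla logically in the $\mathrm{X}$-basis, and apply $(\mathrm{S}_L^\dagger)^m$ conditioned on outcome $m$; correctness of this branch is the standard MSI identity of Bravyi--Kitaev, which we may invoke once the preceding branches certify that the $\mathrm{CNOT}_L$, logical measurement and conditional $\mathrm{S}_L$ steps are themselves correct distributed operations within our framework.

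The anticipated obstacle is the $\mathrm{T}_L$ branch. Unlike the Clifford branches, magic state injection involves a measurement on a logical ancilla and a conditional logical Clifford correction, so verifying correctness requires checking that (a) the measurement outcome $m$ is indeed distributed exactly as in the local MSI protocol, which in turn requires that $\ket{T}_L$ was prepared with the correct $+1$ eigenvalue structure under the code's stabilisers, and (b) the conditional $(\mathrm{S}_L^\dagger)^m$ is applied coherently with the measurement, i.e., the classical feed-forward does not race with any residual Pauli frame from earlier branches. I plan to handle (a) by treating the magic-state factory as an atomic oracle that produces $\ket{T}_L$ satisfying the stabilizer conditions, and (b) by bookkeeping the Pauli frame classically at the client, exactly as is done in the blindness argument of Section~\ref{sec:security_analysis} for the one-time pad. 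Once the branch-by-branch claims are established, the inductive step closes, giving $\ket{\psi_m} = g_{m,L} \cdots g_{1,L} \ket{0\dots 0}_L = \mathrm{U}_L \ket{0\dots 0}_L$, as required.
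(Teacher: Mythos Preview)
Your proposal is correct and follows essentially the same approach as the paper's own justification: a branch-by-branch case analysis over the gate types (transversal Clifford, non-transversal single-qubit Clifford, $\mathrm{CNOT}$ via the $\mathrm{CZ}$ decomposition with Algorithm~\ref{alg:dist_cu_op}, and $\mathrm{T}$ via magic state injection), concluding correctness by composition. Your explicit inductive framing and the additional care you take with the $\mathrm{T}_L$ branch (measurement distribution and Pauli-frame bookkeeping) are, if anything, more detailed than the paper's treatment, which simply appeals to the Bravyi--Kitaev result and the correctness of the constituent Clifford subroutines.
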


\begin{proof}[\textbf{Justification:}]
The proof proceeds by analyzing each conditional case within the main `for' loop of the algorithm. We show that for every elementary gate $g_i$ in the input sequence $G$, the algorithm dispatches it to a provably correct implementation subroutine.

\begin{enumerate}
\item \textbf{Case 1: Clifford gates ($g_i \in \{\mathrm{H, S, CNOT}\}$):}
The algorithm first checks if the required logical gate $g_{i,L}$ is transversal for the chosen stabilizer code.
\begin{itemize}
    \item \textbf{If transversal:} The algorithm executes a transversal operation by instructing each server node to apply the corresponding physical gate locally. This is correct by the definition of a transversal gate for that code, which guarantees that the physical operation $g_i^{\otimes n}$ correctly implements the logical operation $g_{i,L}$ while preserving the codespace.
    \item \textbf{If non-transversal:}
    \begin{itemize}
        \item For a single-qubit gate ($\mathrm{H, S}$), the algorithm executes a pre-compiled sequence of other available fault-tolerant logical gates. The existence of such a sequence is guaranteed because the Clifford group is finitely generated. The correctness of this step relies on the correctness of the subroutines used for the generating gates.
        \item For a two-qubit $\mathrm{CNOT}$ gate, the algorithm executes the sequence $\mathrm{H_L \cdot CZ_L \cdot H_L}$. This decomposition is a fundamental operator identity. 
        The correctness of the local $\mathrm{H_L}$ gates is established above, and the correctness of the non-local $\mathrm{CZ_L}$ gate is guaranteed by the algebraic proof of Algorithm~\ref{alg:dist_cu_op} (see Appendix~\ref{apndx:algo_2_proof}).
    \end{itemize}
\end{itemize}
Thus, in all sub-cases, any logical Clifford gate is implemented by a correct procedure.

\item \textbf{Case 2: Non-Clifford gate ($g_i = \mathrm{T}$):}
The algorithm implements the logical $\mathrm{T}$-gate using the standard, provably correct paradigm of magic state injection (MSI). 
The correctness of the MSI protocol is well-established in the literature \cite{bravyi2005universal}. 
The protocol is constructed entirely from logical Clifford gates (like $\mathrm{CNOT_L}$) and logical measurements, which are themselves implemented by correct subroutines as established in the case above. 
Since the MSI protocol is built from correct components and is a proven method for implementing the T-gate, this path is also correct.

\item \textbf{Conclusion:}
The algorithm iterates through the sequence $G = (g_1, g_2, \dots, g_m)$. At each step $i$, it correctly implements the logical gate $g_{i,L}$. Since every step in the sequence is executed correctly, the final state of the system is necessarily $(\prod_{i=1}^m g_{i,L})\ket{0\dots0}_L$, which is the desired final state $\mathrm{U}_L\ket{0\dots0}_L$. The overall algorithm is therefore correct by composition.
\end{enumerate}
\end{proof}

\section{Correctness proofs for local QEC schemes}
\label{app:qec_proofs}

In this appendix, we provide the formal algebraic correctness proofs for the two custom procedural quantum error correction (QEC) schemes presented in the main text (Section~\ref{sec:core_protocols}). 
A scheme is considered correct if it can successfully identify and reverse a specified set of errors, restoring the data qubit to its original, uncorrupted state. 
The proofs proceed by tracing the evolution of the state vector through the respective circuits for each error case.

\subsection{Algebraic proof of correctness for method 1 (4-qubit QEC scheme)}

\textbf{Claim:} The circuit shown in Fig.~\ref{fig:qec_method_1} correctly identifies and corrects any single Pauli error ($\mathrm{X}$, $\mathrm{Y}$ or $\mathrm{Z}$) applied to the data qubit, assuming the three ancillary qubits are error-free.

\begin{proof}[\textbf{Proof:}]

Let the initial state of the data qubit be $\ket{\psi} = \alpha\ket{0} + \beta\ket{1}$, and the three ancillary qubits be in the state $\ket{000}$. The protocol can be conceptually split into an encoding phase ($\mathrm{U}_{\text{enc}}$) and a decoding phase ($\mathrm{U}_{\text{dec}}$). 
An error $E \in \{\mathrm{X, Y, Z}\}$ occurs on the data qubit after encoding. The final state is thus given by $\ket{\Psi_f} = \mathrm{U}_{\text{dec}} (E \otimes \mathrm{I}^{\otimes 3}) \mathrm{U}_{\text{enc}} (\ket{\psi} \otimes \ket{000})$.
We trace the state through the circuit in Fig.~\ref{fig:v1_QEC_proof}.

\begin{figure}[h]
    \centering
    \begin{quantikz}
        \lstick{$\ket{\psi}$} & \ctrl{2} & \gate{\mathrm{H}} & \ctrl{1} \slice{$\ket{\Psi_{enc}}$}& \gate[wires=4, style={dashed, rounded corners, fill=blue!20}]{\text{Error}} \slice{$\ket{\Psi^{(E)}}$}& \ctrl{1} & \targ{}  & \gate{\mathrm{H}} & \ctrl{2} & \targ{}  & \targ{} \slice{$\ket{\Psi_f^{(E)}}$} & \rstick{$\ket{\psi}$}\\
        \lstick{$\ket{0}$}& & & \targ{} & & \targ{} & \ctrl{-1} & & & & & \\
        \lstick{$\ket{0}$}& \targ{} & \gate{\mathrm{H}} & \ctrl{1} & & \ctrl{1} & \targ{} & \gate{\mathrm{H}} & \targ{} & \ctrl{-2} & \ctrl{-2} & \\
        \lstick{$\ket{0}$}& & & \targ{} & & \targ{} & \ctrl{-1} & & & & \ctrl{-3} &
    \end{quantikz}
    \caption{State evolution for the 4-qubit QEC scheme.}
    \label{fig:v1_QEC_proof}
\end{figure}
\begin{enumerate}
\item \textbf{Encoding phase:} The state is encoded up to step $\ket{\psi_{enc}}$. This encoded state is:
\begin{align*}
    \ket{\psi_{enc}} = \frac{1}{2} \left[ \alpha(\ket{00}+\ket{11})(\ket{00}+\ket{11}) + \beta(\ket{00}-\ket{11})(\ket{00}-\ket{11}) \right].
\end{align*}
An error $E$ is now applied to the first (data) qubit.

\item \textbf{Case 1: X error on data qubit.}
The state after error is $\ket{\Psi^{(\mathrm{X})}} = (\mathrm{X}_1 \otimes \mathrm{I}^{\otimes 3})\ket{\psi_{enc}}$. 
After the decoding circuit, the final state is:
\begin{equation*}
    \ket{\Psi_f^{(\mathrm{X})}} = (\alpha\ket{0}+\beta\ket{1})\ket{100} = \ket{\psi} \otimes \ket{100}.
\end{equation*}
The final state separates into the original state $\ket{\psi}$ and the ancillary state $\ket{100}$, which we call the measurement syndrome \texttt{100}.

\item \textbf{Case 2: Z error on data qubit.}
The state after error is $\ket{\Psi^{(\mathrm{Z})}} = (\mathrm{Z}_1 \otimes \mathrm{I}^{\otimes 3})\ket{\psi_{enc}}$. 
Ignoring the global phase, the final state after decoding is:
\begin{equation*}
    \ket{\Psi_f^{(\mathrm{Z})}} = (\alpha\ket{0}+\beta\ket{1})\ket{010} = \ket{\psi} \otimes \ket{010}.
\end{equation*}
The measured syndrome is \texttt{010}.

\item \textbf{Case 3: Y (=iXZ) error on data qubit.}
The state after error is $\ket{\Psi^{(\mathrm{Y})}} = (\mathrm{Y}_1 \otimes \mathrm{I}^{\otimes 3})\ket{\psi_{enc}}$. 
The final state after decoding is:
\begin{equation*}
    \ket{\Psi_f^{(\mathrm{Y})}} = (\alpha\ket{0}+\beta\ket{1})\ket{110} = \ket{\psi} \otimes \ket{110}.
\end{equation*}
The measured syndrome is \texttt{110}. 
\end{enumerate}
The final state in each case separates into the original state $\ket{\psi}$ and the ancillary state, which represents the measurement syndrome. 
The full lookup table, including the errors and their effects on the syndrome qubit, is provided in Table~\ref{table:v1-table}.
It can be easily observed that this scheme can correct any Pauli error on the data qubit. 
The scheme is therefore correct with respect to its stated design goals.

\begin{table}[h!]
\centering
\begin{tabular}{@{}ccccc@{}}
    \toprule
    \textbf{Error} & \textbf{on $q_1$ (data)} & \textbf{on $q_2$} & \textbf{on $q_3$} & \textbf{on $q_4$} \\ 
    \midrule
    \textbf{X}  & N.E.     & \textcolor{red}{Z} & N.E. & \textcolor{red}{Z} \\ 
                & \texttt{100}      & \texttt{100}      & \texttt{001}  & \texttt{001} \\
    \midrule
    \textbf{Z}  & N.E.     & N.E.              & \textcolor{red}{X} & \textcolor{red}{X} \\ 
                & \texttt{010}      & \texttt{010}      & \texttt{010}  & \texttt{010} \\
    \midrule
    \textbf{Y} & N.E.     & \textcolor{red}{Z}  & N.E. & \textcolor{red}{Z} \\ 
                & \texttt{110}      & \texttt{110}      & \texttt{011}  & \texttt{011} \\
    \bottomrule
\end{tabular}
\caption{Syndrome table for the 4-qubit QEC scheme (Method 1). Each cell shows the propagated error on the data qubit (N.E. = No error) and the measured 3-bit syndrome.}
\label{table:v1-table}
\end{table}

\end{proof}

\subsection{Algebraic proof of correctness for method 2 (6-qubit Scheme)}

\textbf{Claim:} The circuit in Fig.~\ref{fig:qec_method_1} correctly provides deterministic correction for any single $\mathrm{X}$ or $\mathrm{Y=(iXZ)}$ error on the data qubit and provides a unique ``flag" syndrome for a single $\mathrm{Z}$ error on the data qubit.

\begin{proof}[\textbf{Proof:}]
The proof follows the same methodology, tracing the state evolution for the 6-qubit system. 
Let the initial state of the data qubit be $\ket{\psi} = \alpha\ket{0} + \beta\ket{1}$, and the five ancillary qubits be in the state $\ket{00000}$.  
We trace the state through the circuit in Fig.~\ref{fig:v2_QEC_proof}.

\begin{figure}[h]
    \centering
    \begin{center}
\begin{tikzpicture}
\node[scale=0.8]{
$$
\begin{quantikz}
\lstick{\LARGE $\ket{\psi}$} & \ctrl{3} & \gate{\mathrm{\text{\Large H}}} & \ctrl{1} & \ctrl{2} \slice{\large$\ket{\Psi_{enc}}$}& \gate[6, disable auto height]{\mathrm{\text{\LARGE \verticaltext{Error}}}}\gategroup[6,steps=1,style={dashed,rounded
corners,fill=blue!20, inner
xsep=0pt, inner ysep=0pt},background,label style={label
position=below,anchor=north,yshift=-0.2cm}]{{\sc
}} \slice{\large$\ket{\Psi^{(E)}}$}& \ctrl{1} & \ctrl{2} & \targ{}   & \gate{\mathrm{\text{\Large H}}}  & \ctrl{3} & \targ{}   &\targ{}   &\targ{}   &\targ{}  \slice{\large$\ket{\Psi_f^{(E)}}$} & \rstick{\LARGE $\ket{\psi}$}\\
\lstick{\LARGE $\ket{0}$}    &  &   & \targ{}  & &                 & \targ{}  && \ctrl{-1} & &         &          &           & & & \\
\lstick{\LARGE $\ket{0}$}    & &   &  & \targ{}  &                & & \targ{}  & \ctrl{-2} & &         &          &           & & & \\
\lstick{\LARGE $\ket{0}$} & \targ{} & \gate{\mathrm{\text{\Large H}}} & \ctrl{1} & \ctrl{2} &  & \ctrl{1} & \ctrl{2} & \targ{}   & \gate{\mathrm{\text{\Large H}}}  & \targ{} & \ctrl{-3}   & \ctrl{-3}   & \ctrl{-3}   & \ctrl{-3}   &\\
\lstick{\LARGE $\ket{0}$}    & &   & \targ{}  & &                 & \targ{}  && \ctrl{-1} & &        &          &            \ctrl{-4} & & \ctrl{-4} &\\
\lstick{\LARGE $\ket{0}$}    & &   &  & \targ{}  &                & & \targ{}  & \ctrl{-2} & &        &          &            \ctrl{-5} & \ctrl{-5} & & 
\end{quantikz}
$$
};
\end{tikzpicture}
\end{center}
% \vspace{-0.8cm}
    \caption{State evolution for the 6-qubit QEC scheme.}
    \label{fig:v2_QEC_proof}
\end{figure}

\begin{enumerate}
\item \textbf{Encoding phase:} The state after encoding is:

\begin{align*}
 \ket{\psi_{enc}} = \frac{1}{2} \left[ \alpha(\ket{000}+\ket{111})(\ket{000}+\ket{111}) + \beta(\ket{000}-\ket{111})(\ket{000}-\ket{111}) \right]. 
\end{align*}
We now analyze the final state for all the cases with any single Pauli error $E \in \{\mathrm{ X, Y, Z}\}$. 

\item \textbf{Case 1: X error on data qubit.}
An $\mathrm{X}$ error on the first qubit of the encoded state, $\ket{\psi_{enc}}$, yields the state $\ket{\Psi^{(\mathrm{X})}} = (\mathrm{X}_1 \otimes \mathrm{I}^{\otimes 5})\ket{\psi_{enc}}$. 
After applying the decoding circuit, the final state is:
\begin{equation*}
    \ket{\Psi_f^{(\mathrm{X})}} = (\alpha\ket{0}+\beta\ket{1})\ket{11000}.
\end{equation*}
The measured syndrome is \texttt{11000}. 
As shown in Table~\ref{table:v2-table}, this syndrome is unique to this specific error, allowing for deterministic correction.

\item \textbf{Case 2: Z error on data qubit.}
The state after the error is $\ket{\Psi^{(\mathrm{Z})}} = (\mathrm{Z}_1 \otimes \mathrm{I}^{\otimes 5})\ket{\psi_{enc}}$. 
The final state after the decoding is:
\begin{equation*}
    \ket{\Psi_f^{(\mathrm{Z})}} = (\alpha\ket{0}+\beta\ket{1})\ket{00100}.
\end{equation*}
The measured syndrome is \texttt{00100}. 
This syndrome is not mapped to a deterministic correction. 
It serves as a \textit{flag} to signal that an error from outside the primary protected set has occurred.

\item \textbf{Case 3: Y=(iXZ) error on data qubit.}
The state after the error is $\ket{\Psi^{(\mathrm{Y})}} = (\mathrm{Y}_1 \otimes \mathrm{I}^{\otimes 5})\ket{\psi_{enc}}$. 
Ignoring the global phase, the final state after the decoding is:
\begin{equation*}
    \ket{\Psi_f^{(\mathrm{Y})}} = (\alpha\ket{0}+\beta\ket{1})\ket{11100}.
\end{equation*}
The measured syndrome is \texttt{11100}. 
As with the $\mathrm{X}$ error, this syndrome is unique to this error (as per Table ~\ref{table:v2-table}) and allows for deterministic correction.
\end{enumerate}

\begin{table}[h!]
\centering
\label{table:v2-table}
\begin{tabular}{@{}ccccccc@{}}
    \toprule
    \textbf{Error} & \textbf{on $q_1$ (data)} & \textbf{on $q_2$} & \textbf{on $q_3$} & \textbf{on $q_4$} & \textbf{on $q_5$} & \textbf{on $q_6$} \\ 
    \midrule
    \textbf{X}  & N.E.    & N.E.    & N.E.    & N.E.           & N.E.           & N.E. \\ 
                & \texttt{11000}   & \texttt{10000}   & \texttt{01000}   & \texttt{00011}   & \texttt{00010}   & \texttt{00001} \\
    \midrule
    \textbf{Z}  & N.E.    & N.E.    & N.E.    & \textcolor{red}{X} & \textcolor{red}{X} & \textcolor{red}{X} \\ 
                & \texttt{00100}   & \texttt{00100}   & \texttt{00100}   & \texttt{00100}   & \texttt{00100}   & \texttt{00100} \\
    \midrule
    \textbf{Y} & N.E.    & N.E.    & N.E.    & N.E.           & N.E.           & N.E. \\ 
                & \texttt{11100}   & \texttt{10100}   & \texttt{01100}   & \texttt{00111}   & \texttt{00110}   & \texttt{00101} \\
    \bottomrule
\end{tabular}
\caption{Syndrome table for the 6-qubit QEC scheme (Method 2).}
\end{table}

The results for all single-qubit errors on data and syndrome qubits are summarized in Table~\ref{table:v2-table}, which serves as the complete lookup table for all single-qubit errors.
The key observation is that the syndrome \texttt{00100} is unique to the $\mathrm{Z}$ error. 
As shown in the table, it does not correspond to any of the deterministically correctable $\mathrm{X}$ or $\mathrm{Y}$ errors.
Therefore, when the node measures this syndrome, it knows that an error from outside the primary protected set ($\mathrm{X}$ and $\mathrm{Y}$) has occurred. 
It serves as an unambiguous flag, signaling a failure of deterministic correction and allowing the node to report this event to a higher-level fault-tolerance routine.
Therefore, the scheme is correct with respect to its stated design goals.

\end{proof}

%%=============================================%%
%% For submissions to Nature Portfolio Journals %%
%% please use the heading ``Extended Data''.   %%
%%=============================================%%

%%=============================================================%%
%% Sample for another appendix section			       %%
%%=============================================================%%

%% \section{Example of another appendix section}\label{secA2}%
%% Appendices may be used for helpful, supporting or essential material that would otherwise 
%% clutter, break up or be distracting to the text. Appendices can consist of sections, figures, 
%% tables and equations etc.

\end{appendices}

%%===========================================================================================%%
%% If you are submitting to one of the Nature Portfolio journals, using the eJP submission   %%
%% system, please include the references within the manuscript file itself. You may do this  %%
%% by copying the reference list from your .bbl file, paste it into the main manuscript .tex %%
%% file, and delete the associated \verb+\bibliography+ commands.                            %%
%%===========================================================================================%%

\bibliography{sn-bibliography}% common bib file
%% if required, the content of .bbl file can be included here once bbl is generated
%%\input sn-article.bbl

\end{document}